\title{
  Improved Hardness Results for Learning Intersections of Halfspaces
}
\newcommand\IfRestateTF{%
  \ifx\label\thmt@gobble@label % or just compared to \@gobble
    \expandafter\@firstoftwo
  \else
    \expandafter\@secondoftwo
  \fi
}
\newcommand{\RestateRemark}{\IfRestateTF{{\normalfont\bfseries (Restated) }}{}}
\crefname{problem}{Problem}{Problems}
\crefname{assumption}{Assumption}{Assumptions}
\crefname{fact}{Fact}{Facts}
\crefname{figure}{Figure}{Figures}
\pgfplotsset{compat=1.18}
\newcommand{\paren}[1]{(#1)}
\newcommand{\Paren}[1]{\left(#1\right)}
\newcommand{\abs}[1]{\lvert#1\rvert}
\newcommand{\Abs}[1]{\left\lvert#1\right\rvert}
\newcommand{\card}[1]{\lvert#1\rvert}
\newcommand{\Set}[1]{\left\{#1\right\}}
\newcommand{\norm}[1]{\lVert#1\rVert}
\newcommand{\Norm}[1]{\left\lVert#1\right\rVert}
\newcommand{\snorm}[1]{\norm{#1}^2}
\newcommand{\iprod}[1]{\langle#1\rangle}
\newcommand{\Esymb}{\mathbb{E}}
\newcommand{\Psymb}{\mathbb{P}}
\DeclareMathOperator*{\E}{\Esymb}
\newcommand{\suchthat}{\;\middle\vert\;}
\newcommand\bdot\bullet
\DeclareMathOperator{\Ind}{\mathbf 1}
\DeclareMathOperator{\poly}{poly}
\DeclareMathOperator{\sign}{sign}
\newcommand{\iid}{i.i.d.\xspace}
\newcommand{\Z}{\mathbb Z}
\newcommand{\N}{\mathbb N}
\newcommand{\R}{\mathbb R}
\newcommand{\cA}{\mathcal A}
\newcommand{\cD}{\mathcal D}
\newcommand{\cS}{\mathcal S}
\newcommand{\cU}{\mathcal U}
\newcommand{\cX}{\mathcal X}
\renewcommand{\leq}{\leqslant}
\renewcommand{\geq}{\geqslant}
\let\epsilon=\varepsilon
\newcommand{\sse}{\subseteq}
\newcommand{\e}{\epsilon}
\newcommand*{\Id}{\mathrm{Id}}
\DeclareMathOperator{\Span}{Span}
\newcommand{\err}{\mathrm{err}}
\newcommand{\clwed}[3]{\mathrm{C}_{#1,#2,#3}}
\newcommand{\hclwed}[4]{\mathrm{H}_{#1,#2,#3,#4}}
\newcommand{\nhclwed}[4]{\mathrm{NH}_{#1,#2,#3,#4}}
\newcommand{\clwem}[3]{\mathrm{CLWE}\Paren{#1,#2,#3}}
\newcommand{\hclwem}[4]{\mathrm{HCLWE}\Paren{#1,#2,#3,#4}}
\newcommand{\nhclwem}[4]{\mathrm{NHCLWE}\Paren{#1,#2,#3,#4}}
\newcommand{\negl}{\mathrm{negl}}
\newcommand{\TVD}[2]{\mathrm{TVD}(#1,#2)}
\newcommand{\Be}[1]{\mathrm{Be}(#1)}
\newcommand{\BE}[1]{\mathrm{Be}\Paren{#1}}
\begin{document}
\maketitle
\begin{abstract}

We show strong (and surprisingly simple) lower bounds for weakly learning intersections of halfspaces in the improper setting. 
Strikingly little is known about this problem.
For instance, it is not even known if there is a polynomial-time algorithm for learning the intersection of only two halfspaces.
On the other hand, lower bounds based on well-established assumptions (such as approximating worst-case lattice problems or variants of Feige's 3SAT hypothesis) are only known (or are implied by existing results) for the intersection of super-logarithmically many halfspaces~\cite{klivans2009cryptographic,klivans2006cryptographic,daniely2016complexity}.
Efficient learning of intersections of fewer halfspaces has only been ruled out under less standard assumptions~\cite{daniely2021local} (such as the existence of local pseudo-random generators with large stretch).
We significantly narrow this gap by showing that even learning the intersection of $\omega(\log \log N)$ halfspaces in dimension $N$ takes super-polynomial time under standard assumptions on worst-case lattice problems (namely that SVP and SIVP are hard to approximate within polynomial factors).
Further, we give unconditional hardness results in the statistical query framework.
Specifically, we show that for any $k$ (even constant), learning the intersection of $k$ halfspaces in dimension $N$ requires accuracy $N^{-\Omega(k)}$, or exponentially many queries -- in particular ruling out SQ algorithms with polynomial accuracy for intersections of $\omega(1)$ halfspaces.
To the best of our knowledge this is the first unconditional hardness result for learning a super-constant number of halfspaces.

Our lower bounds are obtained in a unified way via a novel connection we make between intersections of halfspaces and the so-called parallel pancakes distribution~\cite{DKS17,bubeck2019adversarial,CLWE} that has been at the heart of many lower bound constructions in (robust)  high-dimensional statistics in the past few years.

\end{abstract}

\section{Introduction}

This work studies the computational complexity of weakly learning intersections of halfspaces in the PAC model~\cite{DBLP:journals/cacm/Valiant84}.
A halfspace $h_w \colon \R^N \rightarrow \Set{\pm1}$, or linear threshold function (denoted by LTF for short), is a function $x \mapsto \sign(\iprod{x,w})$ for some unit vector $w \in \R^N$.
A fundamental question is to what extent we can predict the output of $h_w$ on a fresh example, when given random example-label-pairs $(x,h_w(x))$ (where $x$ can follow an arbitrary distribution).
This problem is very well-understood and known to be solvable in time polynomial in the dimension and the inverse of the desired accuracy~\cite{maass1994fast}.
On the other hand, surprisingly little is known when considering only slightly more complex functions such as a function of a small number of halfspaces.
This holds true, even if the functions are simple functions, such as the $\mathrm{AND}$ function.
Note that the $\mathrm{AND}$ function of several halfspaces corresponds to their intersection since for $(x,y)$ it holds that $y = 1$ if and only if $x$ is classified as positive by all halfspaces.

This class is particularly appealing since, depending on the number of halfspaces, it interpolates naturally between very simple (a single halfspace) and very complex boolean functions (such as polytopes with many facets).
Studying the performance of efficient algorithms in this setting, parametrized by the number of halfspaces, can thus serve as a benchmark of how complex functions we could hope to learn.
Formally, the problem is defined as follows:
\begin{definition}
    Let $k,N \in \N$.
    A distribution $D$ over $\R^N \times \Set{\pm 1}$ is an \emph{intersection of $k$ halfspaces}, if it can be described as follows:
    Let $w_1,\ldots, w_k \in \R^N$ be unit vectors and for $i \in [k]$ let $h_{w_i} = \sign(\iprod{w_i,x})$.
    Let $D_x$ be an arbitrary distribution over $\R^N$.
    A sample $(x,y)$ from $D$ is produced by first drawing $x \sim D_x$ and then setting $y = 1$ if and only if $h_{w_i}(x) = 1$ for all $i$.
\end{definition}

We measure the performance of an algorithm as follows:
For any function $f \colon \R^N \rightarrow \Set{\pm 1}$, we define the misclassification error with respect to a distribution $D$ over $\R^N \times \Set{\pm 1}$ as $\err_D(f) \coloneqq \Psymb_{(x,y) \sim D} (f(x) \neq y)$.
We say an algorithm \emph{weakly} learns $D$, if given \iid samples from $D$, it outputs a function $\hat{f}$ such that $\err_D(\hat{f}) \leq \tfrac 1 2 - \tfrac 1 {\poly(N)}$, for some polynomial.
Intuitively, this means the algorithm does slightly better than randomly guessing the label $y$.
This paper studies to what extent we can hope to weakly learn the intersection of few (with respect to the dimension) halfspaces.

We remark that we do not restrict our algorithm to output an intersection of $k$ (or more) halfspaces but allow that it returns an arbitrary boolean function.
This setting is called \emph{improper learning}.
In contrast, the setting in which the hypothesis needs to be of the same (or a slightly larger) family, is referred to as (semi-)proper learning.
Proving lower bounds against improper learners has proven to be significantly more difficult than against proper learners.
In particular, while it is known how to show $\mathrm{NP}$-hardness (under randomized reductions) of properly learning many natural classes of functions~\cite{feldman2006optimal,feldman2006new,DBLP:conf/focs/GuruswamiR06, MR2644358-Gopalan10}, there are inherent barriers for showing such reductions in the improper setting~\cite{applebaum2008basing}.
In fact, improper learners are known to be strictly more powerful.
For instance, there are concept classes for which it is known that it is $\mathrm{NP}$-hard to find a proper learner but efficient improper learners exist~\cite{DBLP:journals/cacm/Valiant84,DBLP:journals/jacm/PittV88}.\footnote{3-Term DNFs are an example of such a class, since they are known to be efficiently learnable via 3-CNFs}

\paragraph{Previous hardness results}
Indeed, in the proper learning setting it is known that it is $\mathrm{NP}$-hard to learn the intersection of two halfspaces, even if the learner is allowed to output a function that is an intersection of any constant number of halfspaces~\cite{DBLP:conf/focs/AlekhnovichBFKP04}.
In contrast, in the improper setting, despite extensive work on this topic~\cite{klivans2004learning,DBLP:conf/focs/KlivansOS08,klivans2009baum,DBLP:conf/focs/Vempala10a,sherstov2010optimal,sherstov2021hardest}, it is not even known whether there are polynomial-time algorithms for (improperly) learning the intersection of two halfspaces unless we make additional assumption about the marginal distribution $D_x$.
Nor is there any evidence of hardness.\footnote{The only exception are several structural observations~\cite{sherstov2010optimal,sherstov2021hardest}. We will come back to this later.}

Due to the dearth of algorithmic results, researchers have started to look for evidence of hardness.
Most of these are reduction-based, while a few are unconditional but restricted to the statistical query (SQ) model.
The first hardness result is the seminal work of~\cite{klivans2009cryptographic} showing that for any $\e > 0$, weakly learning the intersection of $N^\e$ halfspaces is not possible in polynomial time, assuming hardness of certain worst-case lattice problems that form the basis of a large branch of cryptography (specifically, approximating SVP and SIVP up to polynomial factors, see~\cref{prob:gap_svp,prob:sivp} and the end of this section for precise definitions and a discussion, we also refer to~\cite{peikert2016decade}).\footnote{This was later strengthened to $\log^C (N)$ for some  constant $C > 2$~\cite{klivans2006cryptographic}. See~\cref{sec:relation_prev_work} for a more detailed discussion.}
This was slightly strengthened in~\cite{daniely2016complexity} to showing that learning intersections of $\omega(\log N)$ halfspaces is hard assuming a widely believed variant of Feige's hypothesis about refuting random 3$\mathrm{SAT}$ instances~\cite{MR2121179-Feige02}.

Going beyond this, researchers had to resort to less standard assumptions to show reduction-based hardness for intersections of even fewer halfspaces.
In particular,~\cite{daniely2021local} showed that assuming the existence of so-called \emph{local} pseudo-random generators with polynomial stretch, even learning intersections of $\omega(1)$ halfspaces is hard -- assuming that a specific candidate function actually satisfies these properties, they are able to show that learning $k$ halfspaces takes time at least $n^{\Omega(k)}$.
While this indeed gives some evidence of hardness, we believe verifying these predictions based on more standard assumptions or via unconditional lower bounds in restricted models of computation is an important line of work.
Yet, proving such strong, or even fine-grained results, under more standard assumptions, such as approximating worst-case lattice problems or (variants of) Feige's hypothesis, has remained elusive.
In our work, we make significant progress in this direction, by showing that even learning the intersection of $\omega(\log \log N)$ halfspaces is hard under standard assumptions about approximating SVP and SIVP similar to~\cite{klivans2009cryptographic}.

In terms of unconditional lower bounds,~\cite{klivans2007unconditional} showed that (roughly speaking), restricted to the SQ model, learning intersections of $k$ halfspaces takes time at least $N^{\Omega(k/\log\log N)}$, ruling out efficient SQ algorithms learning intersections of $\omega(\log \log N)$ halfspaces.
As a by-product of our results, we will give an improved SQ lower bound (via a different hard instance than~\cite{klivans2007unconditional}), showing that learning intersections of $k$ halfspaces needs precision at least $N^{-\Omega(k)}$.
Note that this rules out efficient SQ algorithms for learning intersections of $\omega(1)$ halfspaces but also gives a fine-grained hardness result for $k = O(1)$.

\paragraph{Hardness assumption, SQ model and main results}
We will next state the precise hardness assumption we make.
We remark that we do not expect the reader to be familiar with lattices or these problems and such familiarity is not necessary in order to understand and appreciate the remainder of this paper.
Our reductions will start from a different learning problem that can be stated in elementary terms (see~\cref{sec:techniques}).
For more background on lattices and these problems, we refer to \cite{peikert2016decade}.

An $n$-dimensional \emph{lattice} $L$ is defined to be a discrete additive subgroup of $\R^n$.
It can be fully specified by a basis $B \in \R^{n \times n}$ as $L = B \Z^{n}$.
We will only consider the case in which $B$ is full-rank.
For $1\leq i \leq n$, consider 
\[
    \lambda_i\Paren{L} \coloneqq \inf\Set{r > 0 \suchthat \dim\Paren{\Span\Paren{L \cap B_r(0)} \geq i}} \,.
\]
We can now define $\mathrm{GapSVP}$ and $\mathrm{SIVP}$.
\begin{problem}[Gap Shortest Vector Problem ($\mathrm{GapSVP}$)]
    \label{prob:gap_svp}
    Let $\alpha = \poly(n)$ be arbitrary.
    Given an $n$-dimensional lattice $L$ and $d > 0$ such that either (a) $\lambda_1\Paren{L} \leq d$ or (b) $\lambda_1\Paren{L} > \alpha \cdot d$, decide whether (a) or (b) holds.
\end{problem}

\begin{problem}[Shortest Independent Vector Problem ($\mathrm{SIVP}$)]
    \label{prob:sivp}
    Let $\alpha = \poly(n)$ be arbitrary.
    Given an $n$-dimensional lattice $L$ output a set of linearly independent lattice points of length at most $\alpha \cdot \lambda_n\Paren{L}$.
\end{problem}

We make the following assumption
\begin{assumption}
    \label{assump:lattices}
    There is no quantum algorithm that runs in time $2^{o(n)}$ and uses only $2^{o(n)}$ samples that solves either \cref{prob:gap_svp} or \cref{prob:sivp}.
\end{assumption}
All known (quantum) algorithms for \cref{prob:gap_svp} and \cref{prob:sivp} require time $2^{\Omega(n)}$.
Further, a falsification of the above assumption would be considered a major breakthrough in cryptography (cf.~\cite{peikert2016decade} and references therein for more context).

Similarly, we give some necessary background on the SQ model.
In particular, SQ algorithms only have access to the distributions via query functions $\phi \colon \R^N\times \Set{\pm 1} \rightarrow [-1,1]$.
Upon making a query $\phi$, they receive as an answer a value in $[\E_D \phi(x,y) - \tau,\E_D \phi(x,y) + \tau]$.
$\tau$ is called the \emph{accuracy} or \emph{precision} of the query.
The query function can be arbitrary and outside of making these queries, the algorithms can perform arbitrary computation.
When comparing to sample-based algorithms, typically the number of queries is taken as a proxy for run-time and $1/\tau^2$ as a proxy for the number of samples -- since this many samples are needed to estimate the expectation of a query from samples up to accuracy $\tau$.

Our reduction-based hardness result is as follows
\begin{theorem}[See~\cref{thm:main_crypto_hardness} for full version]
    \label{thm:crypto_intro}
    Let $N, k \in \N$ such that $k \leq O(\sqrt{N})$.
    Under \cref{assump:lattices}, there is no $T = N^{o(\tfrac{k}{\log k + \log\log N})}$-time algorithm that learns the intersection of $k$ halfspaces up to error better than $\tfrac 1 2 - \tfrac 1 {\Omega(T)}$.
\end{theorem}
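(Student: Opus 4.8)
The plan is to prove \cref{thm:crypto_intro} (in the detailed form \cref{thm:main_crypto_hardness}) by a reduction from Continuous Learning With Errors. Under \cref{assump:lattices} --- i.e.\ hardness of approximating \cref{prob:gap_svp} and \cref{prob:sivp} to within polynomial factors --- the $\clwe$ problem is hard via a quantum worst-case-to-average-case reduction \cite{CLWE}; concretely I would use the homogeneous formulation, an elementary reformulation (cf.\ \cref{sec:techniques}) in which one must distinguish, for a hidden unit vector $w\in\R^n$, the \emph{parallel pancakes} distribution \cite{DKS17} --- under which $\iprod{w,x}$ is supported on thin, equally spaced ``pancakes'' and $x$ is otherwise Gaussian --- from $N(0,\Id_n)$. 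The aim is to manufacture from such samples a labeled distribution $D$ over $\R^N\times\sbits$ that (i) is \emph{exactly} an intersection of $k$ halfspaces, and (ii) is weakly learnable only if one can tell the planted case from the null case. Granting this, a hypothetical $T$-time, $O(T)$-sample weak learner run on the manufactured samples outputs $\hat f$ with $\err_D(\hat f)\le\tfrac12-\tfrac1{\poly(N)}$ in the planted case, whereas --- by construction --- in the null case the manufactured labels are independent fair coins, forcing $\err_D(\hat f)\approx\tfrac12$; estimating $\err_D(\hat f)$ on a fresh sample then distinguishes the two cases, contradicting \cref{assump:lattices} once the parameters are balanced.

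Producing the gadget $D$ is where the ``parallel pancakes $\leftrightarrow$ intersections of halfspaces'' connection does the work, and it is the step I expect to be the main obstacle. I would build the marginal $D_x$ as a $k$-fold parallel-pancakes distribution whose $k$ hidden directions come from independent $\clwe$ secrets and span a hidden $k$-dimensional subspace of $\R^N$ (with $N$ a fixed polynomial in the lattice dimension; the restriction $k\le O(\sqrt N)$ reflects the dimension $\clwe$ hardness demands for the relevant pancake spacing), choosing the number of pancakes per direction and the (tiny) pancake width so that $D_x$ agrees with $N(0,\Id_N)$ on all moments up to degree $\Theta(k)$. The label is then the $\mathrm{AND}$ of $k$ halfspaces, one roughly aligned with each hidden direction, with thresholds placed strictly between consecutive pancakes and calibrated so that each halfspace holds with probability $2^{-1/k}$; this keeps the label balanced in the planted case, while in the null case it degenerates to an independent coin. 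Two things must then be checked: that on $\supp(D_x)$ this $\mathrm{AND}$ really does coincide with an honest intersection of $k$ halfspaces of $\R^N$ --- here the thin-pancake geometry is exactly what turns ``threshold between pancakes'' into a clean halfspace --- and that the resulting target is negligibly correlated with every polynomial of degree $o(k)$ under $D_x$ (because on the pancake support the target is constant on each of $\Theta(k)$ pancakes, any good polynomial approximation needs degree $\Omega(k)$). This low-degree property is what yields the unconditional $\mathrm{SQ}$ lower bound mentioned in the introduction, improving \cite{klivans2007unconditional}; for the conditional statement it is instead the $\clwe$-hardness of recovering the hidden directions that rules out efficient learners.

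Correctness of the reduction then splits in two. In the null case, once the label is (up to negligible total-variation error) an unbiased coin independent of $x$, no hypothesis can beat $\tfrac12-\tfrac1{\Omega(T)}$, so the learner provably fails. In the planted case, one argues that any $\hat f$ with $\err_D(\hat f)\le\tfrac12-\tfrac1{\poly(N)}$ is nonnegligibly correlated with $D_x$ conditioned on the hidden structure, hence can be post-processed --- e.g.\ by inspecting the empirical low-degree moments of $\hat f(x)\,x$, or by a direct rounding argument --- to recover a good approximation of at least one hidden direction, i.e.\ to solve the corresponding $\clwe$ instance; this is what the distinguisher runs on a fresh sample. The remaining work is parameter bookkeeping: $2^{o(n)}$-quantum-hardness of \cref{prob:gap_svp}/\cref{prob:sivp} propagates, through the $\clwe$ reduction and the degree-$\Theta(k)$ moment budget, into ruling out time and samples $N^{o(k/(\log k+\log\log N))}$ and error better than $\tfrac12-\tfrac1{\Omega(T)}$ --- the $\log k$ and $\log\log N$ losses arising from the dimension blow-up and the admissible pancake parameters in the $\clwe$ reduction (the unconditional $\mathrm{SQ}$ version, needing only the low-degree property, avoids these and gives the sharper $N^{-\Omega(k)}$ accuracy bound). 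Everything except the structural claim --- that this hard, balanced, cryptographically buried label is realizable with only $k$ halfspaces --- is comparatively routine; threading that needle is the real content of the connection.
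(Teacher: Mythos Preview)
Your high-level outline is right: reduce from $\clwe$ (hence from \cref{assump:lattices}), manufacture a labeled distribution that is an intersection of $k$ halfspaces in the planted case and has independent fair-coin labels in the null case, then use a hypothetical weak learner plus a held-out error estimate as a distinguisher. That skeleton matches the paper exactly. The gap is in the gadget itself, and it is fatal for the parameters you want.

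You propose $k$ independent $\clwe$ secrets $w_1,\dots,w_k$ and one halfspace per direction, with a threshold ``between consecutive pancakes.'' This does not work, for two reasons. First, the label $\bigwedge_i[\iprod{w_i,x}>t_i]$ cannot be manufactured from $\clwe$ samples: the secrets $w_i$ are unknown, and a single $\clwe$ sample $(x,z)$ carries information about $\iprod{w,x}\bmod 1$, not about $\sign(\iprod{w,x}-t)$; moreover you would need all $k$ inner products for the \emph{same} $x$. In the paper the labels come for free from the $\clwe$ $z$-value itself (condition on $z\approx 0$ for $y=+1$ and $z\approx \tfrac12$ for $y=-1$), which is also what makes them independent fair coins in the null case. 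Second, and more structurally, a single halfspace per direction throws away the alternating-pancake geometry entirely; one threshold separates the line into two rays, not into $\Theta(n)$ interlaced intervals. Your construction therefore encodes at most $\poly(k)$ ``bits'' of hardness, and with $N=\poly(n)$ you could at best rule out time $N^{O(k)}\ll 2^{o(n)}$ for $k=\omega(\log\log N)$---far short of the claim.

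What the paper actually does is the opposite: keep a \emph{single} hidden direction $w$ with $\Theta(n)$ pancakes, and observe that the $\pm1$ label (which alternates across pancakes) is the intersection of $k$ degree-$d$ polynomial threshold functions in $\iprod{w,x}$, with $d=\Theta(n/k)$: each univariate degree-$2d$ polynomial carves out $d$ of the negative intervals and is positive on all others. Now apply the Veronese map of degree $2d$ to turn each PTF into an honest halfspace in dimension $N=\Theta(n^{2d})$. The entire quantitative content of the theorem is this tradeoff: $\log N=\Theta(d\log n)=\Theta\big(\tfrac{n}{k}\log n\big)$, hence
\[
N^{\,o\!\left(\tfrac{k}{\log k+\log\log N}\right)}
=\exp\!\left(o\!\left(\tfrac{k\log N}{\log k+\log\log N}\right)\right)
=\exp\!\left(o\!\left(\tfrac{n\log n}{\log n+\log\log n}\right)\right)
=2^{o(n)},
\]
so a weak learner of that runtime would break $\clwe$ in subexponential time. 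The $\log k+\log\log N$ in the exponent is \emph{exactly} the artifact of the Veronese blow-up $N=n^{\Theta(n/k)}$, not of ``admissible pancake parameters.'' Without this degree-$d$-PTF/Veronese step---which your proposal does not contain---there is no way to push hardness down to $k=\omega(\log\log N)$ halfspaces.
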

It is insightful to explicitly compute the time lower bound for specific values of $k$.
First, note that this rules out polynomial-time algorithms for weakly learning the intersection of $\omega(\log \log N)$ halfspaces.
A few other examples are as follows:
For any $0 < \e \leq \tfrac 1 2$, not necessarily constant, learning intersections of $k = N^\e$ halfspaces takes time at least $\exp(\Omega(N^\e \cdot \tfrac{\log N}{\e \log N + \log\log N}))$.
In particular, taking $\e$ to be an absolute constant, we obtain that learning intersections of $N^\e$ halfspaces takes time at least $\exp(\Omega(N^\e))$.
Taking $\e = \tfrac{\log\log N}{\log N}$, we obtain that learning intersections of $\log N$ halfspaces takes time $\exp(\Omega(\tfrac{\log^2 N}{\log \log N}))$.
Finally, for $\e = \omega(\tfrac{\log\log\log N}{\log N})$, we recover the result for $k = \omega(\log \log N)$ mentioned in the beginning.

Further, under the more conservative assumption that there is no algorithm for~\cref{prob:gap_svp,prob:sivp} running in time $2^{\Omega(n^{1-\delta})}$ for any constant $\delta > 0$, we are still able to rule out polynomial-time algorithms for weakly learning the intersection of $\omega(\log^\delta (N))$ halfspaces.
See~\cref{sec:relation_prev_work} for a more detailed comparison with prior work.

Interestingly, our lower bound instance also satisfies a margin property.
The time lower bound we obtain comes close to the runtime of existing algorithms exploiting such a margin assumption \cite{learning_intersections_margins}.
See \cref{sec:crypto_hardness} for a more detailed discussion.

Our SQ hardness results are as follows:
\begin{theorem}
    \label{thm:sq_intro}
    Let $k,N \in \N$ such that $k \leq N^\gamma$ for a sufficiently small absolute constant $\gamma$.
    Any SQ algorithm using queries of accuracy $\tau = N^{-\Omega(k)}$ that learns the intersection of $k$ halfspaces over $\R^N$ up to error better than $\tfrac 1 2 - 4\tau$ must make at least $2^{N^{\Omega(1)}}$ queries.
\end{theorem}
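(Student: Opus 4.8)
The plan is to exhibit a large family of hard instances built from the \emph{parallel pancakes} distribution and then feed its parameters into the standard statistical-dimension framework for SQ lower bounds. Fix $d = \Theta(k)$. For a hidden $d$-dimensional subspace $V \subseteq \R^N$, let $D_V$ be the distribution on $\R^N \times \Set{\pm 1}$ whose $x$-marginal is a standard Gaussian on $V^\perp$ and, on $V$, the $d$-dimensional parallel-pancakes distribution --- a mixture of narrow Gaussians whose centres form a scaled lattice (blurred at scale $\ll$ its minimum distance), with weights tuned so that this $d$-dimensional marginal matches the first $m = \Theta(k)$ moments of $\mathcal N(0,I_d)$ (the existence and moment-matching of such a distribution is the analytic core of the parallel-pancakes construction recalled in the earlier sections). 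The label of $x$ is $y = +1$ iff $x$ lies in a fixed region $P$ that is an intersection of $k$ halfspaces, chosen so that (i) $\Pr_{D_V}[y=1] = \tfrac12 \pm o(1)$, so no constant hypothesis weakly learns $D_V$, and (ii) the resulting labelling of the pancakes cluster inside $V$ is "generic" enough that the labelled distribution $D_V$ is (nearly) uncorrelated with every function of Hermite-degree at most $\Omega(k)$ under a Gaussian-marginal null --- this is the point where matching $\Theta(k)$ moments \emph{and} using $k$ halfspaces both enter. By construction each $D_V$ is a genuine intersection of $k$ halfspaces.

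\paragraph{Correlation estimates}
Let $D_\star$ be the null distribution: a standard Gaussian on $\R^N$ together with an independent uniform label. Ranging $V$ over a packing $\cV$ of $d$-dimensional subspaces that are pairwise well separated in principal angles --- which has size $2^{\Omega(N)}$, since $d = \Theta(k) \leq N^{o(1)}$ --- I will bound the $\chi^2$-correlations $\E_{D_\star}\Paren{\Paren{\tfrac{D_V}{D_\star}-1}\Paren{\tfrac{D_{V'}}{D_\star}-1}}$ for $V \neq V'$. Because the pancakes marginal agrees with the Gaussian on the first $m = \Theta(k)$ moments, this correlation is carried entirely by Hermite components of degree $> m$, and the angular separation between $V$ and $V'$ makes each such component tiny; together with the (larger but controlled) self-correlation $\chi^2(D_V \,\Vert\, D_\star) \leq N^{O(k)}$, this is what pins down the two thresholds in the theorem --- accuracy $N^{-\Omega(k)}$ and $2^{N^{\Omega(1)}}$ queries.

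\paragraph{Translation to learning}
If an SQ algorithm of tolerance $\tau$ outputs a hypothesis $\hat f \colon \R^N \to \Set{\pm1}$ with $\err_{D_V}(\hat f) < \tfrac12 - 4\tau$, then $\E_{D_V}[\hat f(x)\,y] = 1 - 2\,\err_{D_V}(\hat f) > 8\tau$, while $\E_{D_\star}[\hat f(x)\,y] = 0$ since the null label is independent of $x$ and unbiased; thus the bounded query $\phi(x,y) = \hat f(x)\,y \in [-1,1]$ separates $D_V$ from $D_\star$ by more than $2\tau$. By the standard simulation argument --- if every query issued is answered consistently with $D_\star$, the algorithm's output is the one it would produce when run on $D_\star$, which by the correlation bound misclassifies all but an exponentially small fraction of $\cV$ --- the run must contain a query that "detects" $V$; the correlation estimate bounds how many $V \in \cV$ a single tolerance-$\tau$ query can detect, and combining the two forces $2^{N^{\Omega(1)}}$ queries whenever $\tau = N^{-\Omega(k)}$. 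The hypothesis $k \leq N^\gamma$ is used precisely to ensure a hidden subspace of dimension $\Theta(k)$ fits in $\R^N$ with room for a packing $\cV$ of size $2^{\Omega(N)}$, and that the bound $N^{O(k)}$ on the self-correlation is negligible against $2^{\Omega(N)}$.

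\paragraph{Main obstacle}
The crux is the instance construction: arranging that the positive region is \emph{simultaneously} a genuine intersection of only $k$ halfspaces (hence convex), almost perfectly balanced, and "high degree" in the sense of (ii) above. A convex region along a single hidden pancakes direction is just a slab, and the corresponding labelled distribution is weakly learnable by SQ with polynomial accuracy and polynomially many queries (the coordinate query $x_i\,y$ already correlates with the $i$-th coordinate of the hidden direction); this is exactly why the hidden object must be a genuinely $\Theta(k)$-dimensional pancakes cluster and why carving it into a high-degree, balanced pattern requires $k$ halfspaces. The bulk of the work is in verifying that some such $k$-facet region achieves (i)--(ii) and interacts with the moment-matching of the pancakes so as to yield the correlation bounds of the second step; the remaining ingredients --- the moment-matching of parallel pancakes, the Hermite correlation estimates, and the generic statistical-dimension SQ lower bound --- are by now standard.
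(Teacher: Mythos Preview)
Your proposal correctly identifies the central obstacle: if the hidden structure is one-dimensional and the positive region must be a genuine intersection of halfspaces (hence convex), it collapses to a slab and is SQ-easy. But the workaround you propose --- passing to a $\Theta(k)$-dimensional hidden subspace and carving out a $k$-facet convex region $P$ --- is never carried out, and it is precisely the step on which everything hinges. You need the pancakes distribution restricted to $P$ (and to $P^c$) to approximately match $\Theta(k)$ Gaussian moments; this is a very rigid combinatorial/analytic constraint on a convex polytope, and you give no construction or argument that such a $P$ exists. You yourself flag this as ``the bulk of the work,'' so the proposal is really a restatement of the problem rather than a solution. (Indeed, convexity of $P$ works against you: informally, convex indicators under a Gaussian are low-degree objects, which is the opposite of condition~(ii).)

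The paper resolves the obstacle in a completely different way: it keeps the hidden direction \emph{one-dimensional} and abandons convexity in the original space. Concretely, it takes two discrete one-dimensional distributions $A,B$ supported on the roots of consecutive Hermite polynomials (so each matches $2k-1$ moments with $N(0,1)$), mildly smooths and truncates them so that their supports are disjoint unions of $k$ short intervals, and sets $y=+1$ iff $\langle x,w\rangle$ lands in $S_A$. The positive region along $w$ is a union of $k$ intervals --- not convex --- but each interval is captured by a single degree-$2$ polynomial in $\langle x,w\rangle$, so the labelling is an intersection of $k$ degree-$2$ PTFs. The Veronese map $x\mapsto (x^\alpha)_{|\alpha|\le 2}$ then turns these into $k$ genuine halfspaces in dimension $\Theta(N^2)$, and the quadratic blow-up is absorbed into the $\Omega(\cdot)$ in $N^{-\Omega(k)}$. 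The SQ bound itself follows from the approximate moment-matching of $\tilde A,\tilde B$ with $N(0,1)$ and a $\chi^2$ bound, plugged into the standard statistical-dimension machinery. In short: the missing idea in your plan is the PTF-plus-Veronese trick, which lets a one-dimensional, non-convex construction do all the work and makes the ``$k$-facet high-degree convex region'' you are searching for unnecessary.
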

Note that this shows that even weakly learning intersections of $\omega(1)$ halfspaces requires super-polynomial precision in the SQ model or exponentially many queries.
Similarly, it shows that the fine-grained complexity of learning intersections of $k$ halfspaces scales as $N^{\Omega(k)}$.
We remark that we prioritized clarity and did not attempt to optimize any constants, neither in the condition that $k \leq N^\gamma$ nor in the exponent of the accuracy or the number of queries.

\paragraph{Future work}

We remark that both our lower bound instance can be solved in time $N^{O(k)}$ since they can be represented as a degree-$O(k)$ polynomial threshold function (see~\cref{sec:techniques} for all details) -- and thus can be learned in time $N^{O(k)}$ via linear programming~\cite{maass1994fast}.
This suggests that we should look for instances that cannot be represented as low-degree polynomial threshold functions.
This approach seems particularly well-motivated since it is known that, at least when the input comes from the boolean hypercube, there exists an intersection of even 2 halfspaces that cannot be represented by degree-$o(n)$ polynomial threshold functions~\cite{sherstov2010optimal,sherstov2021hardest}.

\subsection{More on Previous Results}
\label{sec:relation_prev_work}

We elaborate a bit more on the connection between our work and previous hardness results below.

The work closest to us is~\cite{klivans2009cryptographic} (and the companion work~\cite{klivans2006cryptographic}).
Their hardness results are ultimately also based on the hardness of~\cref{prob:gap_svp,prob:sivp}.
However, their hardness result follows by showing that intersections of halfspaces can encode a public-key encryption system due to Regev~\cite{DBLP:journals/jacm/Regev09} known to be secure assuming hardness of these lattice problems.
Thus, a learning algorithm could break the crypto-system and hence falsify~\cref{prob:gap_svp,prob:sivp}.
While we start from the same assumptions, we give a more direct reduction, completely bypassing the need to introduce any public-key encryption schemes.
This more direct reduction is what enables our improved SQ lower bounds.

On a quantitative level,~\cite{klivans2009cryptographic} shows that for any absolute constant $\e > 0$ a $\poly(N)$-time algorithm for learning intersections of $k = N^\e$ halfspaces in dimension $N$ would yield a $\poly(n)$-time algorithm for~\cref{prob:gap_svp,prob:sivp} in dimension $n$.
In particular, their results are implied by a weaker version of~\cref{assump:lattices} in which we only assume that there is no $\poly(n)$-time algorithm for~\cref{prob:gap_svp,prob:sivp}.\footnote{More specifically, they show that is true even when setting $\alpha = \tilde{O}(n^{1.5})$. We strongly believe that this is also true for our reduction but we did not attempt to make this explicit for clarity. ~\cref{prob:gap_svp,prob:sivp} are believed to be hard for any $\alpha = \poly(n)$.}
In~\cite{klivans2006cryptographic}, the same authors observed that their reduction implies stronger lower bounds under quantitatively stronger assumptions on~\cref{prob:gap_svp,prob:sivp} (closer to our~\cref{assump:lattices}).
Pushed to the limit, their result yields that~\cref{assump:lattices} implies that learning intersections of $\omega(\log N)$ halfspaces in dimension $N$ takes super-polynomial time, matching the result of~\cite{daniely2016complexity} under a different assumption -- we remark that this is not formally stated in~\cite{klivans2006cryptographic} but follows immediately from their techniques.

In particular, allowing $\e$ to be sub-constant, their techniques can be used to  show that~\cref{assump:lattices} implies that learning intersections of $N^\e$ halfspaces takes time $\exp(\Omega(N^\e))$ (see the discussion at the end of~\cref{sec:techniques} for a more detailed argument and technical comparison to our work).
This should be compared with our lower bound $\exp(\Omega(N^\e \tfrac{\log N}{\log\log N}))$ for $\e \leq \tfrac{\log\log N }{\log N}$ (and similar for larger $\e$).
This lower bound is significantly larger and in particular allows to obtain hardness results of exponentially fewer halfspaces ($\omega(\log\log N)$).

We strongly believe that our techniques also allow for a trade-off of the following form:
To rule out polynomial-time algorithms for learning intersections of more halfspaces under quantitatively weaker assumptions.
We choose not to make this explicit for clarity of exposition and since already a $2^{o(n)}$-time algorithm for either of~\cref{prob:gap_svp,prob:sivp} would be a major breakthrough.

Along a different direction,~\cite{tiegel_agnostic_lattices,diakonikolas_massart_lwe,diakonikolas2023near} show lower bounds for learning a single halfspaces in various error models under \cref{assump:lattices}.

\section{Technical Overview}
\label{sec:techniques}

\paragraph{Relation to parallel pancakes and SQ lower bound}

Our lower bounds are based on a novel connection we make between the so-called ``parallel pancakes'' distribution~\cite{DKS17,bubeck2019adversarial,CLWE} and intersections of halfspaces.
On a high level, the former is a mixture of few Gaussians, that is hard to distinguish from the standard Gaussian distribution.
It (or versions thereof) has played a pivotal role in obtaining computational hardness results for learning theory problems.
Yet, the connection to intersections of halfspaces had not been observed before.
Similar ideas, without any reference to parallel pancakes, were implicitly used in~\cite{klivans2009cryptographic}.
By making this connection explicit and expanding on it, we are able to obtain improved lower bounds in both the SQ model and under \cref{assump:lattices}.
More specifically, our connection allows us to leverage that (variants of this) distribution are known to be hard to learn in the SQ model and based on \cref{assump:lattices}.
Fleshing out all details and satisfying all distribution requirements exactly will take some additional work.

We start by describing one version of the parallel pancakes distribution and showing our SQ lower bound (\cref{thm:sq_intro}).
Unfortunately, this connection alone is not enough to establish our reduction-based result (\cref{thm:crypto_intro}) as well.
The reason is that the known hardness results for parallel pancakes under \cref{assump:lattices} are quantitatively weaker than those known under SQ -- and in particular would by themselves only rule out efficient algorithms for learning $\omega(\log N)$ halfspaces.
Towards the end of this section, we will outline how to show a hardness result for $\omega(\log \log N)$ halfspaces using a modified construction.

It is known~\cite{bubeck2019adversarial} (see also~\cite{DKS17}) that there are two one-dimensional distributions $A,B$ satisfying the following properties (see~\cref{sec:sq_hardness} for all details):
\begin{enumerate}
    \item $A$ and $B$ are mixtures of $k$ Gaussians.
    \item There exists two unions of $k$ disjoint intervals $S_A$ and $S_B$, such that only a negligible fraction of the probability mass of $A$ (resp. $B$) lies outside $S_A$ (resp. $S_B$).
    \item The intervals in $S_A \cup S_B$ are disjoint and ``interlacing'' in the sense that they alternate.
    \item Both $A$ and $B$ match $k$ moments with $N(0,1)$.
\end{enumerate}
Consider now the following distribution $D_{A,B}$ over $\R^N\times\Set{\pm 1}$: First, pick a uniformly random unit vector $w$, and let $D_A$ (resp. $D_B$) be the distribution over $\R^N$ that is $A$ (resp. $B$) along $w$ and a standard Gaussian in the complement.
Then, set $D_{A,B} = \tfrac 1 2 (D_A, +1) + \tfrac 1 2 (D_B, -1)$.
Using results from~\cite{bubeck2019adversarial,DK22} it is not hard to deduce that $D_{A,B}$ is hard to distinguish from $N(0,\Id_N) \times \Be{\tfrac 1 2}$ in the SQ model.
In particular, this task either requires queries of accuracy better than $N^{-\Omega(k)}$ (suggesting that we need at least $N^{\Omega(k)}$ samples) or $2^{N^{\Omega(1)}}$ queries.
(We give a full argument for our variant of this distribution in~\cref{sec:sq_hardness}.)

We offer some intuition on $D_{A,B}$:
This distribution corresponds to a mixture of two related instances of a labelled version of the parallel pancakes distribution alluded to earlier.
In particular, for $(x,y) \sim D_{A,B}$, if $y = +1$, $x$ follows the standard parallel pancakes distribution and if $y = -1$, $x$ follows a ``shifted'' parallel pancakes distribution in which the pancakes are shifted along the hidden direction such that they are (mostly) disjoint from the pancakes for $y = +1$.
See Figure 1 for an illustration.

\paragraph{Modifying the instance to obtain an intersection of degree-2 PTFs}
Our SQ lower bound follows from the simple but powerful observation that a slight variant of this distribution can be realized as an intersection of $k$ degree-2 polynomial threshold functions (denoted PTFs for short).
Note that this is enough to show our hardness result.
Indeed, recall that we aim to show that learning the intersection of $k$ halfspaces in dimension $N$ takes time $N^{\Omega(k)}$.
For this it is sufficient to show that learning the intersection of $k$ degree-2 PTFs in dimension $N$ takes time at least $N^{\Omega(k)}$ since we can represent these as intersections of $k$ halfspaces over an $O(N^2)$-dimensional space.
We can absorb the quadratic blow-up in the dimension in the $\Omega(\cdot)$-notation.

Note that a priori $D_{A,B}$ cannot be realized as such an intersection:
Since the density of both $A$ and $B$ are positive on all of $\R$, there exists a region in which the label can be both + and -1 with some small probability.
Since our model is noiseless, this should not be possible.
Fortunately, these regions only make up a small fraction of the total probability mass and we can get rid of them by truncating the mixture components.
Indeed, let $\tilde{A},\tilde{B}$ be obtained by conditioning $A$ (resp. $B$) to lie in $S_A$ (resp. $S_B$) and let $D_{\tilde{A},\tilde{B}}$ be obtained analogously as before (replacing $A$ and $B$ by $\tilde{A}$ and $\tilde{B}$).
In~\cref{sec:sq_hardness} we show that $D_{\tilde{A},\tilde{B}}$ enjoys the same hardness guarantees in the SQ model as $D_{A,B}$, i.e, that this distribution is still hard to distinguish from $N(0,\Id_N) \times \Be{\tfrac 1 2}$ in the relevant parameter regime.
This follows by showing that the first $k$ moments of both $\tilde{A}$ and $\tilde{B}$ still match those of $N(0,1)$ up to small error ($N^{-\Omega(k)}$) and their $\chi^2$-divergence with $N(0,1)$ is not too large ($2^{O(k)} \log N$) -- this uses results based on~\cite{DK22}.

To see that $D_{\tilde{A},\tilde{B}}$ is an intersection of $k$ degree-2 PTFs, note the following:
By construction, for a sample $(x,y) \sim D_{\tilde{A},\tilde{B}}$, $y = 1$ if and only if $\iprod{x,w} \in S_A$.
Further $y = -1$ if and only if $\iprod{x,w} \in S_B$.
Thus, for every interval $I \sse S_B$, consider the polynomial $p_I \colon \R \rightarrow \R$ that is symmetric around the mid-point of $B$, is negative on $I$, and has its roots at half the distance between the end of $I$ and the next interval in $S_A$.
Note that $p_I$ is negative on $I$ and positive on all other intervals in both $S_A$ and $S_B$.
The final choice of degree-2 PTFs is then $\tilde{p}_I \colon \R^N \rightarrow \R$ such that $\tilde{p}_I(x) = p_I(\iprod{x,w})$.
By construction, if $\iprod{x,w} \in S_A$, $\tilde{p}_I(x) \geq 0$ for all $I$ and if $\iprod{x,w} \in S_B$ there exists $\tilde{p}_I$ such that $\tilde{p}_I(x) < 0$.
It follows that $D$ corresponds to the intersection of the $\tilde{p}_I$.
Since $S_B$ contains $k$ intervals, this yields the claim.
See Figure 1 (a) for an illustration.

\definecolor{ao(english)}{rgb}{0.0, 0.5, 0.0}

\begin{figure}
\RawFloats
  \begin{minipage}[b]{0.45\linewidth}
    \centering
    \pgfplotsset{samples=100}

\begin{tikzpicture}
  \begin{axis}[
    width=8cm,
    height=6cm,
    axis lines=middle,
    xlabel=$w$,
    ylabel=,
    xmin=-1.5,
    xmax=1.5,
    ymin=-0.2,
    ymax=1.1,
    domain=-3.5:3.5,
    xtick={-1,-0.5,0,0.5,1},
    xticklabels={},
    xticklabel shift={10pt},
    ytick=\empty,
    ticklabel style={font=\scriptsize},
    legend style={at={(0.97,0.9)},anchor=north east,font=\tiny},
    legend cell align=right
  ]
  
  % Continuous LWE distribution
  \def\gam{2};
  \def\bet{0.2};
  \def\delt{0.25 + 1} % beta^2 + gamma^2
    \addplot[blue,smooth,line width=1pt,domain=-0.2:0.2] { exp(-pi*x^2) * exp(-pi * ( (0 - \gam * x) / \bet )^2 ) };
  \addlegendentry{$y=+1$};

  \addplot[red,smooth,line width=1pt,domain=0.3:0.7] { exp(-pi*x^2) * exp(-pi * ( (1 - \gam * x) / \bet )^2 ) };
  \addlegendentry{$y=-1$};

  \addplot[ao(english),dashed, line width=1pt, restrict y to domain=-3.5:0.2] { 3 * (x-0.25) * (x-0.75)};
  \addplot[ao(english),dashed, line width=1pt, restrict y to domain=-3.5:0.2] { 3 * (x+0.25) * (x+0.75)};
  \addplot[ao(english),dashed, line width=1pt, restrict y to domain=-3.5:0.2] { 3 * (x+1.75) * (x+1.25)};
  \addplot[ao(english),dashed, line width=1pt, restrict y to domain=-3.5:0.2] { 3 * (x+0.25) * (x+0.75)};
  \addplot[ao(english),dashed, line width=1pt, restrict y to domain=-3.5:0.2] { 3 * (x-1.25) * (x-1.75)};
  \addlegendentry{Collection of $p_I$'s};

  \addplot[blue,smooth,line width=1pt,domain=0.8:1.2] { exp(-pi*x^2) * exp(-pi * ( (2 - \gam * x) / \bet )^2 ) };
  \addplot[blue,smooth,line width=1pt,domain=-1.2:-0.8] { exp(-pi*x^2) * exp(-pi * ( (-2 - \gam * x) / \bet )^2 ) };

  \addplot[red,smooth,line width=1pt,domain=-0.7:-0.3] { exp(-pi*x^2) * exp(-pi * ( (-1 - \gam * x) / \bet )^2 ) };

  \addplot[red,smooth,line width=1pt,domain=1.3:1.7] { 20*exp(-pi*x^2) * exp(-pi * ( (3 - \gam * x) / \bet )^2 ) };
  \addplot[red,smooth,line width=1pt,domain=-1.7:-1.3] { 20*exp(-pi*x^2) * exp(-pi * ( (-3 - \gam * x) / \bet )^2 ) };
  
  \end{axis}
\end{tikzpicture}

    \caption*{(a)}
  \end{minipage}%
  \hspace{0.09\linewidth}
  \begin{minipage}[b]{0.45\linewidth}
    \centering
    
    \pgfplotsset{samples=100}

\begin{tikzpicture}
  \begin{axis}[
    width=8cm,
    height=6cm,
    axis lines=middle,
    xlabel=$w$,
    ylabel=,
    xmin=-1.5,
    xmax=1.5,
    ymin=-0.2,
    ymax=1.1,
    domain=-3.5:3.5,
    xtick={-1,-0.5,0,0.5,1},
    xticklabels={},
    xticklabel shift={10pt},
    ytick=\empty,
    ticklabel style={font=\scriptsize},
    legend style={at={(0.97,0.9)},anchor=north east,font=\tiny},
    legend cell align=right
  ]
  
  % Continuous LWE distribution
  \def\gam{2}
  \def\bet{0.2}
  \def\delt{0.25 + 1} % beta^2 + gamma^2
    \addplot[blue,smooth,line width=1pt,domain=-0.2:0.2] { exp(-pi*x^2) * exp(-pi * ( (0 - \gam * x) / \bet )^2 ) };
  \addlegendentry{$y=+1$};

  \addplot[red,smooth,line width=1pt,domain=0.3:0.7] { exp(-pi*x^2) * exp(-pi * ( (1 - \gam * x) / \bet )^2 ) };
  \addlegendentry{$y=-1$};

  \addplot[orange,dashed, line width=1pt, restrict y to domain=-3.5:0.4] { 3 * (x-0.25) * (x-0.75) * (x-1.25) * (x-1.75)};
  \addplot[orange,dashed, line width=1pt, restrict y to domain=-3.5:0.4] { 3 * (x+0.25) * (x+0.75) * (x+1.75) * (x+1.25)};

  \addlegendentry{Collection of $p_I$'s};

  \addplot[blue,smooth,line width=1pt,domain=0.8:1.2] { exp(-pi*x^2) * exp(-pi * ( (2 - \gam * x) / \bet )^2 ) };
  \addplot[blue,smooth,line width=1pt,domain=-1.2:-0.8] { exp(-pi*x^2) * exp(-pi * ( (-2 - \gam * x) / \bet )^2 ) };

  \addplot[red,smooth,line width=1pt,domain=-0.7:-0.3] { exp(-pi*x^2) * exp(-pi * ( (-1 - \gam * x) / \bet )^2 ) };

  \addplot[red,smooth,line width=1pt,domain=1.3:1.7] { 20*exp(-pi*x^2) * exp(-pi * ( (3 - \gam * x) / \bet )^2 ) };
  \addplot[red,smooth,line width=1pt,domain=-1.7:-1.3] { 20*exp(-pi*x^2) * exp(-pi * ( (-3 - \gam * x) / \bet )^2 ) };

  \end{axis}
\end{tikzpicture}

    \caption*{(b)}
  \end{minipage}
  \caption{(a) shows how to capture the parallel pancakes distribution using degree-2 PTFs and (b) shows how to do the same using higher-degree PTFs (degree-4 in this case).}
    \label{fig:pancakes}
\end{figure}

To solve the distinguishing problem, we can run our weak learner on our input distribution and with one additional query compute the misclassification error of the produced hypothesis.
Since in the null case the label $y$ is independent of $x$, this should be 1/2, while it should be bounded away from $1/2$ in the planted case by assumption on our weak learner.
We can thus solve the distinguishing problem.

\paragraph{Lower bound based on \cref{assump:lattices}}

``Parallel Pancakes''-type distributions are also known to be hard to distinguish from a standard Gaussian under \cref{assump:lattices}.
In particular, using results from~\cite{CLWE,CLWE_2} one could show that a similar distribution, that also has $k$ components, takes time roughly at least $2^{\Omega(k)}$ to distinguish from a standard Gaussian.
Unfortunately, using this, we could only hope to rule out learning intersections of $\omega(\log N)$ halfspaces, which is exponentially worse than $\omega(\log \log N)$.
In order to obtain our improved lower bound, we make use of the following observation:
Instead of showing that intersections of degree-2 PTFs are hard to learn, we can also show that degree-$d$ PTFs are hard to learn for $d > 2$.
Note that this introduces a fundamental tradeoff:
The larger we choose $d$, the smaller the number of halfspaces becomes but the blow-up in the dimension is exponential in $d$.
Luckily for us, there is still a choice of $d$ that rules out learning $\omega(\log \log N)$ halfspaces.

\cite{tiegel_agnostic_lattices} (building on~\cite{CLWE}) showed the following (see~\cref{sec:crypto_hardness} for all details\footnote{\cite{tiegel_agnostic_lattices} used a construction based on these distributions to show that learning a single halfspace in the agnostic model is hard under \cref{assump:lattices}. Note that this is different from our setting as we do not allow noise in the labels.}): There are two one-dimensional distributions $A,B$ satisfying
\begin{enumerate}
    \item $A,B$ are mixtures of infinitely many (truncated) Gaussians.
    \item There exists two unions of infinitely many disjoint intervals $S_A,S_B$, such that $A$ (resp. $B$) is supported on $S_A$ (resp. $S_B$).
    \item The intervals in $S_A \cup S_B$ are disjoint and ``interlacing'' in the sense that they alternate.
    \item If there is an algorithm distinguishing $D_{A,B}$ from $N(0,\Id_n) \times \Be{\tfrac 1 2}$ using $2^{o(n)}$ samples and running in time $2^{o(n)}$, then \cref{assump:lattices} is false.
\end{enumerate}
In what follows we will denote the dimension of $D_{A,B}$ by $n$.
We will denote the dimension of the halfspaces by $N$ (which will roughly be $n^d$).
Our first observation is that we can restrict to the $2n+1$ most central intervals in $A$ and $B$ respectively.
It is not hard to show that the resulting $D_{A,B}$ is $2^{-\Omega(n)}$-close to the original one in total variation distance.
Thus, even when seeing $2^{o(n)}$ samples from this distribution, the respective product distributions are still $2^{-\Omega(n)}$ close in total variation, and hence, the associated distinguishing problem is just as hard.
We can hence assume that $S_A, S_B$ contain only $2n+1$ intervals.

Let $d = \tfrac {2n+1} k + 1$ and for simplicity assume this is an even integer.
By a similar construction as for the SQ lower bound, $D_{A,B}$ can be realized as an intersection of $k$ degree-$d$ PTFs -- this time each PTF traces out $d-1$ intervals in $S_B$, instead of just 1.
See \Cref{fig:pancakes}(b) for an illustration. % Editor note: replace "Figure 1(b)" with an actual \Cref command?
These can be realized as an intersection of $k$ halfspaces in dimension $N = \Theta(n^d)$.
Recall that we want to rule out algorithms weakly learning the intersections of halfspaces that run in time $N^{o(\tfrac k {\log k +\log \log N})}$.
We claim that such an algorithm can distinguish $D_{A,B}$ from $N(0,\Id_n)\times \Be{1/2}$.
In particular, since $\log N = \Theta(d \cdot \log n) = \Theta(\tfrac n k \cdot \log n)$ an algorithm running in time $N^{o(\tfrac k {\log k +\log \log N})}$ runs in time $2^{o(n)}$.
Indeed,
\begin{align*}
    N^{o\Paren{\tfrac{k}{\Paren{\log k + \log\log N}}}} &= \exp\Paren{o\Paren{\frac {k \cdot \log N}{\log k + \log\log N}}} = \exp\Paren{o\Paren{\frac {n \cdot \log n}{\log k + \log n - \log k + \log\log n}}} \\
        &=2^{o\Paren{n}}\,.
\end{align*}
Thus, to solve the distinguishing problem we can use a similar reduction as in the SQ model:
Run the learner on the first half of the input samples and compute the empirical misclassification error on the second.
Again, under null this should be very close to 1/2 whereas under planted it should be bounded away from 1/2.

\paragraph{Comparison to~\cite{klivans2009cryptographic}}

The work~\cite{klivans2009cryptographic} shows that the intersection of $O(n)$ degree-2 PTFs can encode the decryption function of a crypto-sytem by Regev~\cite{DBLP:journals/jacm/Regev09}.
Under \cref{assump:lattices} breaking this crypto-system requires time at least $2^{\Omega(n)}$.
Using a similar argument as above, they deduce that learning $O(\sqrt{N})$ halfspaces in dimension $N$ takes time at least $2^{\Omega(\sqrt{N})}$ -- where the $\sqrt{N}$ comes from the quadratic blow-up in the dimension.
Further, they argue the following: For any $\e > 0$, by padding all vectors with 0, we can artificially blow-up the dimension to $N = n^{\tfrac 1 \e}$.
The number of halfspaces is then $k = N^\e$ (over the $N$-dimensional space) and the learning task requires time at least $2^{\Omega(n)} = \exp(N^\e) = \exp(k)$.
It follows that learning intersections of $\omega(\log N)$ halfspaces in dimension $N$ takes time at least $N^{\omega(1)}$.

Note that this simple padding argument cannot go beyond $\omega(\log N)$ halfspaces, intuitively, the padding strategy does not exploit the additional space available in higher dimensions.
On the other hand, our argument based on higher-degree PTFs shows that exploiting this is indeed possible.
Further, our arguments completely bypass the need to introduce any crypto-systems.
In fact, it is not clear how the construction based on Regev's crypto-system would yield unconditional lower bounds in the SQ model.

\section{Preliminaries}
\label{sec:preliminaries}

\paragraph{Notation}
We denote $\R_{\geq 0} = [0,\infty)$ and $\R_{> 0} = (0,\infty)$.
For a set $S$, we denote by $\cU(S)$ the uniform distribution over $S$.
We define the Total Variation Distance between two measures $P$ and $Q$ as \[\TVD{P}{Q} = \sup_A \Abs{P(A) - Q(A)}\,.\]

Let $n$ be some parameter.
For the problem of distinguishing two distributions $D_n^{0}$ and $D_n^1$ we define the advantage of an algorithm $\cA$ as \[\Abs{\Psymb_{x \sim D_n^0} \Paren{\cA(x) = 0} - \Psymb_{x \sim D_n^1} \Paren{\cA(x) = 0}}\,.\]
We say that an algorithm has non-negligible advantage if it has advantage $\Omega(n^{-c})$ for some constant $c > 0$.

Let $p \in [0,1/2]$.
We denote by $\mathrm{Be}(p)$ the distribution that is equal to +1 with probability $p$ and equal to -1 with probability $1-p$.

Let $\cX$ be some set and $D$ be a distribution over $\cX \times \Set{-1,+1}$.
Further, let $h \colon \cX \rightarrow \Set{-1,+1}$ be a binary hypothesis.
We denote the \emph{misclassification error} of $h$ as \[\err_D \Paren{h} = \Psymb_{(x,y) \sim D}\Paren{h(x) \neq y} \,.\]
Most of the time the distribution $D$ will be clear from context and we will omit the subscript.
We denote by $D_x$ the marginal distribution of $D$ over $\cX$.
If the domain of $D_x$ is $\R^n$, we say an algorithm weakly learns $D$, if it outputs a binary hypothesis $\hat{h}$ such that $\err_D\paren{\hat{h}} \leq \tfrac 1 2 - \tfrac 1 {\poly(n)}$ for some choice of $\poly(n)$.

\paragraph{Margin}
Let $c \colon \R^n \rightarrow \Set{-1,+1}$ be any boolean function and $X \sse \R^n$.
We say $c$ has margin $\rho$ with respect to $X$, if
\[
    \frac{\min \Set{\norm{x - z} \suchthat x \in X \,, z \in \R^n \,, c(z) \neq c(x) } } {\sup_{x \in X} \norm{X}} \geq \rho \,.
\]
For a distribution $D_x$ over $\R^n$ we say that $c$ has margin $\rho$ with respect to $D_x$, if it has margin $\rho$ with respect to the set of points that have strictly positive density under $D_x$.
Note that in our definition of margin we normalize the points to have at most unit norm.\footnote{For intersections of halfspaces we could equivalently, up to a factor of 2, define the margin to be the smallest distance of a point of non-zero probability mass to one of the halfspaces.}

\paragraph{Gaussian distributions}
We denote the standard $n$-dimensional Gaussian distribution by $N(0,I_n)$.
If the dimension is clear from context, we sometimes drop the subscript of the identity matrix.
For $s > 0$, we denote by $\rho_s \colon \R^n \rightarrow \R_+$ the function \[\rho_s(x) = \exp(-\pi \snorm{x /s})\,.\]
If $s = 1$, we omit the subscript.
Note that $\rho_s/s^n$ is equal to the probability density function of the $n$-dimensional Gaussian distribution with mean 0 and covariance matrix $s^2/(2\pi) \cdot I_n$.
In particular, it holds that \[\int_{\R^n} \rho_s(x) \,dx = s^n \,.\]
For a radius $\alpha > 0$ we define the following truncated version of $\rho_s$.
\[\rho_s^{\alpha}(x) = \begin{cases}
    \tfrac 1 Z \cdot \rho_s(x) \,, &\quad \text{if } \Norm{x} \leq \alpha \,, \\
    0 \,, &\quad \text{otherwise,}
\end{cases}\]
where \[Z = \frac{\int_{\Norm{x} \leq \alpha} \rho_s(x) \, d x}{\int_{\R} \rho_s(x) \, d x}\,.\]

For a lattice $L \sse \R^n$ and $s > 0$ we define the discrete Gaussian distribution $D_{L, s}$ with width $s$ as having support $L$ and probability mass proportional to $\rho_s$.
Further, for a discrete set $S$, we define $\rho_s\Paren{S} = \sum_{x \in S} \rho_s\Paren{x}$.

\paragraph{Various versions of Continuous LWE}

\begin{definition}[CLWE Distribution]
    \label{def:clwe_distribution}
    Let $w \in \R^n$ be a unit vector and $\beta, \gamma > 0$.
    Define the distribution $\clwed{w}{\beta}{\gamma}$ over $\R^n \times [0,1)$ as follows.
    Draw $y \sim N(0,\tfrac 1 {2\pi} \cdot I_n)$, $e \sim N(0,\beta^2/(2\pi))$ and let \[z = \gamma \cdot \iprod{y,w} + e \mod 1\,.\]
    Note that the density of this distribution is given by \[p(y, z) = \frac{1}{\beta} \cdot \rho \Paren{y} \cdot \sum_{k \in \Z} \rho_\beta \Paren{z + k - \gamma \iprod{w, y}} \,.\]

    Further, let $m \in\N$.
    We denote by $\clwem{m}{\gamma}{\beta}$ the distribution obtained by first drawing $w \sim \cU(\cS^{n-1})$ and then drawing $m$ independent samples from $\clwed{w}{\gamma}{\beta}$.
\end{definition}

\begin{definition}[Homogeneous CLWE (hCLWE) Distribution]
    \label{def:hclwe_distribution}
    Let $w \in \R^n$ be a unit vector, $c \in [0,1)$, and $\beta, \gamma > 0$.
    Let $\pi_{{w}^\perp}(y)$ be the projection of $y$ onto the space orthogonal to $w$.
    Define the distribution $\hclwed{w}{\beta}{\gamma}{c}$ over $\R^n$ as having density at $y$ proportional to
    \begin{align}
        \sum_{k \in \Z} \rho_{\sqrt{\beta^2 + \gamma^2}}(k - c) \cdot \rho\Paren{\pi_{{w}^\perp}(y)} \cdot \rho_{\beta / \sqrt{\beta^2 + \gamma^2}}\Paren{\iprod{w, y} - \tfrac{\gamma}{\beta^2 + \gamma^2} (k-c) }\,. \label{eq:hclwe}
    \end{align}

    Further, let $m \in\N$.
    We denote by $\hclwem{m}{\gamma}{\beta}{c}$ the distribution obtained by first drawing $w \sim \cU(\cS^{n-1})$ and then drawing $m$ independent samples from $\hclwed{w}{\gamma}{\beta}{c}$.
\end{definition}
Intuitively, one can think of the $\hclwed{w}{\gamma}{\beta}{c}$ distribution as $\clwed{w}{\gamma}{\beta}$ conditioned on $z = c$.

\begin{definition}[Truncated hCLWE Distribution]
    Let $w \in \R^n$ be a unit vector, $c \in [0,1), \beta, \gamma > 0$ and $\alpha = \frac 1 {10} \cdot \frac{\gamma}{\gamma^2 + \beta^2}$.
    Define the distribution $\nhclwed{w}{\beta}{\gamma}{c}^{(n)}$ over $\R^n$ as having density proportional to
    \begin{align}
        \sum_{k = -n}^n \rho_{\sqrt{\beta^2 + \gamma^2}}(k - c) \cdot \rho\Paren{\pi_{{w}^\perp}(y)} \cdot \rho_{\beta / \sqrt{\beta^2 + \gamma^2}}^\alpha\Paren{\iprod{w, y} - \tfrac{\gamma}{\beta^2 + \gamma^2}\Paren{k - c} }\,. \label{eq:nhclwe}
    \end{align}
    The superscript refers to the range of the summation.

    Further, let $m \in\N$ and $\cS$ be a distribution over unit vectors in $\R^n$.
    We denote by $\nhclwem{m}{\gamma}{\beta}{c}$ the distribution obtained by first drawing $w \sim \cU(\cS^{n-1})$ and then drawing $m$ independent samples from $\nhclwed{w}{\gamma}{\beta}{c}$.
\end{definition}

Note that this is the same as the hCLWE distribution but with the individual components of the mixture truncated in the hidden direction and restricting to the middle $2n+1$ components.
$\alpha$ is chosen such that the components become non-overlapping but the resulting distribution has small total variation distance to the corresponding non-truncated hCLWE distribution.
Although this is strictly speaking not necessary to prove our result, we will see that having non-overlapping components will simplify our analysis.

We make the following hardness assumption about the CLWE distribution.
\begin{assumption}
    \label{assump:disting_clwe}
    Let $n, m \in \N$ and \[\gamma \geq 2\sqrt{n}\,,\quad\quad \beta = \frac 1 {\poly\Paren{n}}\,.\]
    Further, let $\delta < 1$ be arbitrary and $m = 2^{n^\delta}$.
    There is no $2^{n^\delta}$-time distinguisher between \[\clwem{m}{\gamma}{\beta} \quad\text{and}\quad N\Paren{0, \tfrac 1 {2\pi} \cdot I_n}^m \times U\Paren{[0,1)}^m\] with non-negligible advantage.
\end{assumption}

Note that by \cite[Corollary 3.2]{CLWE} this is implied by assuming~\cref{assump:lattices}.

\paragraph{Hermite polynomials and moment-matching distributions}

We will also use the following facts about one-dimensional distributions matching moments with $N(0,1)$.
\begin{fact}[\cite{bubeck2019adversarial}]
    \label{fact:moment_matching_hermite}
    For every $k \in \N$ greater or equal to 2, there exist two discrete distributions $A$ and $B$ supported on at most $k$ points such that 
    \begin{itemize}
        \item $A$ and $B$ match at least $2k-1$ moments with $N(0,1)$,
        \item The points in the union of the supports of $A$ and $B$ are pairwise at distance at least $\Omega(1/\sqrt{k})$. Further, they are all contained in the interval $[-C\sqrt{k},C\sqrt{k}]$ for some sufficiently large absolute constant $C > 0$.
    \end{itemize}
\end{fact}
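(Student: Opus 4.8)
\emph{Proof sketch.} The plan is to realize $A$ and $B$ as Gauss--Hermite quadrature rules---discrete distributions supported on zeros of Hermite polynomials with the corresponding Christoffel weights---since these are exactly the minimal-support distributions that match many moments with $N(0,1)$. Write $h_0,h_1,h_2,\dots$ for the Hermite polynomials orthonormal with respect to the density of $N(0,1)$. Step one is to recall the classical Gauss-quadrature fact: for each $m$ there is a probability distribution $\mu_m$ supported on the $m$ zeros of $h_m$, with strictly positive weights (the Christoffel numbers), such that $\E_{\mu_m}[p(X)]=\E_{N(0,1)}[p(X)]$ for every polynomial $p$ of degree at most $2m-1$; in particular $m=k$ matches $2k-1$ moments. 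One proves this by writing a degree-$(2m-1)$ polynomial as $p=h_m q+r$ with $\deg q,\deg r<m$, using $\E_{N(0,1)}[h_m q]=0$ and exactness of the quadrature on polynomials of degree $<m$; positivity of the weights follows by testing the exactness relation on squared Lagrange interpolation polynomials at the nodes. Taking $A=\mu_k$ already gives a $k$-atom distribution matching $2k-1$ moments.

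Step two handles the last bullet, which is just the geometry of the zeros of $h_m$, and this is classical: all zeros lie within $[-C\sqrt m,C\sqrt m]$ for an absolute constant $C$, and consecutive zeros are $\Theta(1/\sqrt m)$ apart in the bulk and never closer than $\Omega(1/\sqrt m)$ anywhere. Boundedness follows from the three-term recurrence (or from the location of the turning points of the ODE satisfied by $e^{-x^2/4}h_m(x)$), and the spacing follows from the Plancherel--Rotach asymptotics or, more robustly, from a Sturm comparison for that same equation. Specializing to $m=\Theta(k)$ yields the claimed $\Omega(1/\sqrt k)$ separation and the containment in $[-C\sqrt k,C\sqrt k]$.

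Step three is to produce a second distribution $B$ whose support is disjoint from, and comparably separated from, that of $A$. The idea is to pass to a finer common grid: take the $2k$ zeros $r_1<\dots<r_{2k}$ of $h_{2k}$, split them into the interlacing odd- and even-indexed halves (each of size $k$), and on each half solve the linear system asking for a probability vector whose first power sums match the corresponding moments of $N(0,1)$. The system's matrix is a (transposed) Vandermonde in $k$ distinct nodes, hence invertible, so it is uniquely solvable, and the two supports are automatically disjoint and inherit $\Omega(1/\sqrt k)$ separation and the $O(\sqrt k)$ bound from Step two (their union is the full zero set of $h_{2k}$). I expect the \emph{main obstacle} to be precisely verifying that the two solution vectors are \emph{nonnegative}: unlike the genuine Gauss nodes, an arbitrary node configuration need not carry a positive quadrature rule, so one has to exploit that each half-grid is a small perturbation of a genuine Gauss configuration and transport the strictly positive Christoffel weights of the full $2k$-node rule onto it while preserving the moment equations---which keeps nonnegativity at the price of matching a slightly smaller, still $\Theta(k)$, number of moments. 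A fallback that sidesteps positivity entirely is to let $B$ be the Gauss--Hermite rule on the zeros of $h_{k-1}$ (weights positive by construction, support interlacing with and hence disjoint from that of $A=\mu_k$), with the mutual separation again following from classical estimates on interlacing Hermite zeros, at the cost of $B$ having $k-1$ atoms and matching $2k-3$ moments.
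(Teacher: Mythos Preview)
The paper does not prove this fact; it is cited from \cite{bubeck2019adversarial}, and the only additional information the paper gives is the sentence immediately following the statement: ``The support of $A$ and $B$ corresponds to the roots of the $k$-th and $(k-1)$-th normalized probabilist's Hermite polynomials.'' Your \emph{fallback} in Step three---taking $A$ to be the Gauss--Hermite rule on the zeros of $h_k$ and $B$ the rule on the zeros of $h_{k-1}$---is therefore exactly the construction the paper points to, and your Steps one and two (Gauss quadrature exactness, location and spacing of Hermite zeros) are the standard ingredients that justify it.

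Your \emph{primary} idea in Step three, splitting the zeros of $h_{2k}$ into odd- and even-indexed halves, is not the paper's route, and you correctly identify the obstruction: there is no reason for the Vandermonde system on an arbitrary half-grid to return a nonnegative weight vector. Your sketched repair (perturb from the full $2k$-node Christoffel weights) is plausible but would need real work; the fallback avoids the issue entirely since both $A$ and $B$ are then genuine Gauss rules with automatically positive weights. So the fallback is both what the paper intends and the cleaner argument.

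One small point you already caught: with $B$ on the $k-1$ zeros of $h_{k-1}$, $B$ matches only $2k-3$ moments, not the $2k-1$ claimed in the statement. This is a harmless looseness in the fact as stated (and in the paper); downstream only $\Theta(k)$ matching moments are used (cf.\ the hypothesis of \cref{thm:sq_moment_matching}), so one can absorb the off-by-one by shifting $k$. You should also note that interlacing of the zeros of $h_{k-1}$ and $h_k$ by itself gives only disjointness, not quantitative separation; the $\Omega(1/\sqrt{k})$ mutual gap genuinely needs the asymptotic location estimates you invoke in Step two (Plancherel--Rotach or the ODE/Sturm comparison), applied to both $h_k$ and $h_{k-1}$ simultaneously.
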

The support of $A$ and $B$ corresponds to the roots of the $k$-th and $(k-1)$-th normalized probabilist's Hermite polynomials.

\section{Hardness Under Lattice Assumptions}
\label{sec:crypto_hardness}

In this section, we will prove a slightly more general version of~\cref{thm:crypto_intro}.
We remark that we will not directly work with~\cref{prob:gap_svp,prob:sivp} but rather with the continuous LWE problem introduced in~\cite{CLWE}.
\begin{theorem}
    \label{thm:main_crypto_hardness}
    Let $0 \leq \delta < 1$.
    Let $k,N \in \N$ such that $k \leq O(\sqrt{N})$.
    Assume there is an algorithm that learns the intersection of $k$ halfspaces in dimension $N$ in time $T = N^{o\Paren{\tfrac {k^{1-\delta}} {\Paren{\log k + \log \log N}^{1-\delta}\cdot\log^\delta N}}}$ up to error better than $\tfrac 1 2 -\tfrac 1 {\Omega(T)}$, then there is an algorithm that solves CLWE in dimension $n$ in time $2^{o\Paren{n^{1-\delta}}}$.
    Furthermore, every halfspace in the hard instance has a margin of $\Omega(\tfrac 1 {\sqrt{N}} \cdot \sqrt{\tfrac {\log \log N + \log k} {k \log N}})$.
\end{theorem}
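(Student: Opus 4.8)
The plan is to reduce from the CLWE distinguishing problem, using the truncated hCLWE distribution as the intermediate object and the higher-degree PTF construction outlined in \cref{sec:techniques}. I would proceed in the following steps.

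\emph{Step 1: From CLWE to a noiseless labeled distribution.} First I would invoke the chain of reductions from \cite{CLWE,tiegel_agnostic_lattices} to pass from distinguishing $\clwem{m}{\gamma}{\beta}$ versus $N(0,\tfrac{1}{2\pi}I_n)^m\times U([0,1))^m$ to distinguishing a labeled distribution $D_{A,B}$ from $N(0,I_n)\times\Be{\tfrac12}$, where $A,B$ are the two one-dimensional distributions supported on the interlacing unions of intervals $S_A,S_B$ described in the overview. Concretely, $A,B$ arise from conditioning a (truncated) hCLWE on the two cosets of labels; the key is that $S_A$ and $S_B$ alternate, so the label $y$ is a \emph{deterministic} function of $\iprod{x,w}$, and the marginals are genuinely noiseless. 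I would then apply the truncation to the middle $2n+1$ intervals, verifying via the standard discrete-Gaussian tail bound $\rho_s(\{k:|k|>n\})\le 2^{-\Omega(n)}$ that this changes $D_{A,B}$ by $2^{-\Omega(n)}$ in total variation, so that even on $m=2^{o(n)}$ samples the product distributions stay $2^{-\Omega(n)}$-close and the distinguishing problem is preserved (here I need the normalization constants $Z$ in the definition of $\rho_s^\alpha$ to be $1-2^{-\Omega(n)}$, which follows from $\gamma\ge 2\sqrt n$, $\beta=1/\poly(n)$).

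\emph{Step 2: Realizing $D_{A,B}$ as an intersection of $k$ degree-$d$ PTFs.} Set $d=\tfrac{2n+1}{k}+1$ (rounding to the next even integer). I would partition the $d-1$ (times $k$, total $2n$ minus a lower-order count) intervals comprising $S_B$ into $k$ groups of $d-1$ consecutive $S_B$-intervals each. For the $j$-th group I build a one-dimensional polynomial $p_j$ of degree $d$ (monic, say, with a suitable leading sign) whose roots are placed at the midpoints between each chosen $S_B$-interval and its neighbouring $S_A$-interval, symmetric about the centre of the group; by construction $p_j$ is negative exactly on those $d-1$ intervals of $S_B$ and nonnegative on every interval of $S_A$ and on the remaining intervals of $S_B$. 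Lifting to $\tilde p_j(x)=p_j(\iprod{x,w})$ gives degree-$d$ PTFs such that $\iprod{x,w}\in S_A \iff \forall j:\ \tilde p_j(x)\ge 0$, i.e. the AND of the $\tilde p_j\ge 0$ exactly reproduces the label. Finally, expanding $x\mapsto$ its degree-$\le d$ monomials (the Veronese map) turns each degree-$d$ PTF on $\R^n$ into a halfspace on $\R^N$ with $N=\Theta(n^d)$, so $D_{A,B}$ becomes an intersection of $k$ halfspaces in dimension $N$; and $\iprod{x,w}$ is on the order $\pm C\sqrt{n}$ on the support, with interval gaps $\Omega(\gamma/(\gamma^2+\beta^2))$, which will feed the margin bound below.

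\emph{Step 3: The learner solves the distinguishing problem, and parameter bookkeeping.} Given a purported $T$-time weak learner with $T=N^{o(k^{1-\delta}/((\log k+\log\log N)^{1-\delta}\log^\delta N))}$, split the $m$ input samples in half, run the learner on the first half to get $\hat h$, and estimate $\err(\hat h)$ empirically on the second half. Under the null the label is independent of $x$ so $\err(\hat h)=\tfrac12$ up to $O(1/\sqrt m)$; under the planted case $\err(\hat h)\le\tfrac12-\tfrac{1}{\Omega(T)}$ by hypothesis, and since $T=2^{o(n)}$ this gap is $2^{-o(n)}\gg 1/\sqrt m$ for $m=2^{o(n)}$ chosen appropriately, so we distinguish. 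The arithmetic identity $\log N=\Theta(d\log n)=\Theta(\tfrac nk\log n)$ converts the running-time bound into $2^{o(n^{1-\delta})}$ exactly as in the displayed computation in the overview (for $\delta=0$ this is the $2^{o(n)}$ bound). The margin claim follows because each halfspace $\tilde p_j$, as a function of the normalized Veronese image $x^{\otimes\le d}/\|x^{\otimes\le d}\|$, has its zero set separated from the support by a distance controlled by the smallest interval gap $\Omega(1/(N\sqrt{k\log N}))$ after accounting for the $\|x^{\otimes\le d}\|=O((\sqrt{k\log N})^{d})$ normalization and the coefficient sizes of $p_j$.

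\emph{Main obstacle.} The delicate part is \emph{not} the construction but the quantitative accounting in Steps 1 and 3 simultaneously: one must choose $\beta,\gamma$ (hence the interval geometry) and the truncation level so that (i) the hCLWE-to-labeled reduction and the truncation lose only $2^{-\Omega(n)}$, (ii) the resulting polynomials $p_j$ have bounded degree $d=\Theta(n/k)$ and coefficients that do not blow up the margin below $\Omega(1/(N\sqrt{k\log N}))$, and (iii) the final running-time exponent collapses to $2^{o(n^{1-\delta})}$ under the stated constraint on $T$. Keeping all three compatible — in particular controlling the margin through the degree-$d$ Veronese lift, where both the norm of $x^{\otimes\le d}$ and the dynamic range of $p_j$ grow with $d$ — is where the real work lies; the rest is the moment/$\chi^2$ machinery already imported from \cite{CLWE,tiegel_agnostic_lattices}.
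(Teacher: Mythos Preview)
Your proposal follows the paper's proof essentially verbatim: pass from CLWE to the truncated-hCLWE labeled mixture via \cref{fact:clwe_to_truncated_clwe}, realize that mixture as an intersection of $k$ high-degree PTFs by the interval-tracing construction, lift via the Veronese map to $k$ halfspaces in dimension $N=\Theta(n^{\Theta(n/k)})$, and turn a weak learner into a distinguisher by splitting the sample and thresholding the empirical error; the running-time bookkeeping via $\log N=\Theta(\tfrac n k\log n)$ is exactly the paper's computation.

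One small slip to fix in Step 2: a univariate polynomial that is negative on $d-1$ disjoint intervals and nonnegative elsewhere must have at least $2(d-1)$ real roots (two sign changes per interval), so your $p_j$ has degree $\approx 2(d-1)$, not $d$. The paper accordingly uses degree-$2d$ polynomials with $d=(2n+1)/k$ intervals per group and sets $N=\Theta(n^{2d})$. This factor of two is harmless --- it is absorbed in the $\Theta(\cdot)$ and the identity $\log N=\Theta(\tfrac n k\log n)$ survives --- so your parameter arithmetic and the conclusion are unaffected once the degree is corrected.
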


We refer to the discussion around \cref{thm:crypto_intro} for various instantiations of parameters.

Interestingly, our lower bound almost matches known upper bounds for learning large-margin halfspaces in its dependence on the margin parameter.
In particular, \cite{learning_intersections_margins} show an algorithm to learn intersections of halfspaces with margin $\rho$ in time $O(N \cdot (k / \rho)^{k \log k \cdot \log(1/\rho)})$.
In our lower bound instance $\rho = \Omega(1/ {\sqrt{N \cdot k\cdot \log N}})$ and hence their upper bound becomes roughly (for $k \leq \poly(N)$) $O(N^{k \log (k) \cdot [\log N + \log k]})$ while our lower bound reads $N^{\Omega\Paren{\tfrac k {\log k + \log \log N}}}$.
The most striking difference is the factor of $\log N$, however, note that already for $k = \log^{O(1)} N$ both bounds are basically the same up to a small polynomial factor in the exponent of $k$ and logarithmic terms (in $k$) in the exponent.

% Editor note: there seemed to be a typo on the line below "... are [a] straightforward ...". Please check if the change is okay
We will use the following two facts which are straightforward extensions of facts in \cite{tiegel_agnostic_lattices}.
We will prove them in~\cref{sec:missing_lemmas_crypto}.
\begin{restatable}[Adaptation of Theorem 15 in {\cite{tiegel_agnostic_lattices}}]{fact}{clwetotruncatedclwe}
    \label{fact:clwe_to_truncated_clwe}\RestateRemark
%\begin{fact}[Adaptation of Theorem 15 in \cite{tiegel_agnostic_lattices}]
%    \torestate{
%    \label{fact:clwe_to_truncated_clwe}
    Let $n, m \in \N$ with $2^{o(n)} = m > n$, and let $\gamma, \beta, \e \in \R_{> 0}$ such that $0 \leq \beta \leq \gamma, \beta = \tfrac 1 {\poly(n)}$.
    Assume that there is no $(T + \poly(n,m))$-time distinguisher between
    \[
    \begin{aligned}
        \clwem{m}{\gamma}{\beta} \quad&\text{and}\quad \Paren{N\Paren{0, \tfrac 1 {2\pi} \cdot I_n} \times U\Paren{[0,1)}}^{\otimes m}
    \end{aligned}
    \]
    with advantage $\e$.
    Let $m' = \tfrac m {\poly(n)}$.
    Then there is no $T$-time distinguisher between
    \[
    \begin{aligned}
        \frac 1 2 \cdot \Paren{\nhclwed{\bm w}{\beta}{\gamma}{0}^{(n)}, +1} + \frac 1 2 \cdot \Paren{\nhclwed{\bm w}{\beta}{\gamma}{\tfrac 1 2}^{(n)}, -1} \quad&\text{and}\quad N\Paren{0, \tfrac 1 {2\pi} \cdot I_n} \times \mathrm{Be}\Paren{\frac 1 2}        
    \end{aligned}
    \]
    with advantage $\e-\negl(n)$ that uses at most $m'$ samples.
%    }
%\end{fact}
\end{restatable}

Further, we will use the following fact about the supports of the mixture of homogeneous CLWE distributions.
Its proof is contained in the proof of Lemma 11 in \cite{tiegel_agnostic_lattices}:
\begin{fact}
    \label{fact:support_intersection_distribution}
    Let $S^{(0)},S^{(1)}$ be the support of $\nhclwed{\bm w}{\beta}{\gamma}{0}^{(n)}$ and $\nhclwed{\bm w}{\beta}{\gamma}{\tfrac 1 2}^{(n)}$ respectively.
    Let $\alpha = \tfrac 1 {10} \cdot \tfrac \gamma {\gamma^2 + \beta^2}$ and for $k \in \N$, let $\mu_k^{(0)} = \tfrac \gamma {\gamma^2 + \beta^2} k,\mu_k^{(1/2)} = \tfrac \gamma {\gamma^2 + \beta^2} (k-\tfrac 1 2)$ then
    \begin{align*}
        S^{(0)} &= \bigcup_{k=-n}^n \Set{x \in \R^n \suchthat \iprod{x, w} \in [\mu_k^{(0)}-\alpha,\mu_k^{(0)}+\alpha]} \,, \\
        S^{(1)} &= \bigcup_{k=-n}^n \Set{x \in \R^n \suchthat \iprod{x, w} \in [\mu_k^{(1/2)}-\alpha,\mu_k^{(1/2)}+\alpha]} \,.
    \end{align*}
    Further, $S^{(0)}$ and $S^{(1)}$ are disjoint and at distance at least $\tfrac 1 5 \cdot \tfrac \gamma {\gamma^2 + \beta^2}$.
\end{fact}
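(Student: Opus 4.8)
The plan is to unwind the definition of the truncated hCLWE density in \eqref{eq:nhclwe} and read the support off directly, and then carry out an elementary interval computation. The density of $\nhclwed{w}{\beta}{\gamma}{c}^{(n)}$ is a sum of non-negative terms indexed by $k \in \Set{-n,\ldots,n}$, so a point $y \in \R^n$ lies in the support exactly when at least one of these terms is strictly positive. I would examine the three factors of the $k$-th term separately: the mixing weight $\rho_{\sqrt{\beta^2+\gamma^2}}(k\,;c)$ is a Gaussian value, hence strictly positive for every $k$ and every $c \in \Set{0,\tfrac12}$; the factor $\rho\Paren{\pi_{w^\perp}(y)}$ is strictly positive for every $y$; hence the $k$-th term is positive if and only if the truncated factor $\rho_{\beta/\sqrt{\beta^2+\gamma^2}}^\alpha\Paren{\iprod{w,y}\,;\tfrac{\gamma}{\beta^2+\gamma^2}(k-c)}$ is positive. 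By the definition of $\rho_s^\alpha(\cdot\,;c)$ this happens exactly when $\Abs{\iprod{w,y} - \tfrac{\gamma}{\beta^2+\gamma^2}(k-c)} \leq \alpha$, i.e. when $\iprod{w,y} \in [\mu_k^{(0)}-\alpha,\mu_k^{(0)}+\alpha]$ for $c=0$ and $\iprod{w,y}\in[\mu_k^{(1/2)}-\alpha,\mu_k^{(1/2)}+\alpha]$ for $c=\tfrac12$. Taking the union over $k \in \Set{-n,\ldots,n}$ gives exactly the claimed descriptions of $S^{(0)}$ and $S^{(1)}$.

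For the separation statements I would work entirely with the projections onto $w$. Within a single distribution, consecutive slab centers are at distance $\tfrac{\gamma}{\gamma^2+\beta^2}$ while each slab has half-width $\alpha = \tfrac1{10}\cdot\tfrac{\gamma}{\gamma^2+\beta^2}$, so consecutive slabs are separated by $\tfrac45\cdot\tfrac{\gamma}{\gamma^2+\beta^2} > 0$ and the slabs making up $S^{(0)}$ (and likewise $S^{(1)}$) are pairwise disjoint. For $S^{(0)}$ against $S^{(1)}$, note that each center $\mu_k^{(1/2)}$ lies exactly midway between $\mu_{k-1}^{(0)}$ and $\mu_k^{(0)}$, hence at distance $\tfrac12\cdot\tfrac{\gamma}{\gamma^2+\beta^2}$ from the nearest $S^{(0)}$-center; subtracting the two half-widths leaves an edge-to-edge gap of $\Paren{\tfrac12-\tfrac2{10}}\cdot\tfrac{\gamma}{\gamma^2+\beta^2} = \tfrac3{10}\cdot\tfrac{\gamma}{\gamma^2+\beta^2} \geq \tfrac15\cdot\tfrac{\gamma}{\gamma^2+\beta^2}$ between every slab of $S^{(0)}$ and every slab of $S^{(1)}$; non-neighboring pairs are farther apart, so a neighboring pair realizes the minimum.

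Finally I would lift this one-dimensional separation to $\R^n$: for $x \in S^{(0)}$ and $y \in S^{(1)}$, decomposing along $w$ and its orthogonal complement gives $\Snorm{x-y} = \Paren{\iprod{x,w}-\iprod{y,w}}^2 + \Snorm{\pi_{w^\perp}(x)-\pi_{w^\perp}(y)} \geq \Paren{\iprod{x,w}-\iprod{y,w}}^2$, and since $\iprod{x,w}$ and $\iprod{y,w}$ lie in the two disjoint unions of intervals described above — separated by at least $\tfrac3{10}\cdot\tfrac{\gamma}{\gamma^2+\beta^2}$ — this is at least $\Paren{\tfrac15\cdot\tfrac{\gamma}{\gamma^2+\beta^2}}^2$. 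In particular $S^{(0)}$ and $S^{(1)}$ are disjoint, which completes the proof. There is essentially no real obstacle here: the content is pure bookkeeping of the three density factors followed by interval arithmetic. The only points that need a little care are checking that the support is a union over exactly $\Set{-n,\ldots,n}$ (which is forced by the truncated range of the sum in \eqref{eq:nhclwe}) and confirming that the closest $S^{(0)}$-slab/$S^{(1)}$-slab pair is a neighboring one rather than a farther one.
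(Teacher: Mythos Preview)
Your proposal is correct. The paper does not give its own proof of this fact but defers to the proof of Lemma~11 in \cite{tiegel_agnostic_lattices}; the argument there is exactly the direct verification you describe --- read off the support from the truncated factor in \eqref{eq:nhclwe} and then compute the edge-to-edge gap between interlaced intervals --- so your approach matches it.
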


\begin{proof}[Proof of Theorem~\ref{thm:main_crypto_hardness}]
    Let $d, k \in \N$ and $0 \leq \delta < 1$ (it might be instructive to think of $\delta = 0$ first).
    For simplicity assume that $2n+1$ is divisible by $d$ and let $k = \tfrac {2n+1} d$.
    Let $m  \leq  N^{o\Paren{\tfrac{k^{1-\delta}}{\Paren{\log k + \log\log N}^{1-\delta} \log^\delta N}}}$ and $\tau \geq \tfrac 5 {\sqrt{m}}$.
    We will choose $N$ such that $m \leq 2^{o(n^{1-\delta})}$.
    It follows by \cref{fact:clwe_to_truncated_clwe}, that if there is an algorithm that can distinguish between
    \begin{align*}
        D^{(p)} = \frac 1 2 \cdot \Paren{\nhclwed{\bm w}{\beta}{\gamma}{0}^{(n)}, +1} + \frac 1 2 \cdot \Paren{\nhclwed{\bm w}{\beta}{\gamma}{1/2}^{(n)}, -1}\quad\quad&\text{and}\quad\quad D^{(n)} = N\Paren{0,\tfrac 1 {2\pi} I_n} \times \BE{\tfrac 1 2}
    \end{align*}
    with probability at least $2/3$ in time $2^{o(n^{1-\delta})}$ and using at most $2^{o(n^{1-\delta})}$ samples, then there also is an algorithm that solves CLWE with probability at least, say, $0.6$ in time $2^{o(n^{1-\delta})}$ and using at most $2^{o(n^{1-\delta})}$ samples.
    We will show that a learning algorithm would imply the former.

    \paragraph{The reduction}
    Suppose we are given $m$ samples
    %\footnote{Note that the restriction on the number of samples is without loss of generality, since we need time $\Omega(N)$ to inspect each sample.}
    $((x_i, y_i))_{i=1}^m \in \R^n \times \Set{-1,+1}$ from either of the two distributions.
    We will later choose parameters such that $m = 2^{o(n^{1-\delta})}$.
    Our reduction does the following:
    Let $N = \sum_{j = 0}^{2d} (n+1)^j = \Theta(n^{2d})$.
    We apply the Veronese mapping to the $x_i$, obtaining $((\tilde{x}_i, y_i))_{i=1}^m \in \R^N \times \Set{-1,+1}$, where $\tilde{x}_i = ((1,x_i)^\alpha)_{\card{\alpha} \leq 2d}$.
    For simplicity, assume that $m$ is even.
    We run our learning algorithm on the first $m/2$ samples to obtain a function $\hat{f} \colon \R^N \rightarrow \Set{+1,-1}$.
    Let
    \[
        \widehat{\err\Paren{f}} = \frac 2 m \sum_{i=m/2}^m \Ind\Paren{\hat{f}(x_i') \neq y_i} \,.
    \]
    If $\Abs{\widehat{\err\Paren{f}} - \frac 1 2} > \frac \tau 2$ we output planted and else we output null.

    First assuming that the learner runs in time $N^{o\Paren{\tfrac{k}{\log k + \log\log N}}}$, notice that the procedure described above runs in the same time -- the reduction only add an overhead of $N^{O(1)}$.
    We claim that this total time is equal to $2^{o(n^{1-\delta})}$.
    This in particular implies that the number of samples $m$ processed by the algorithm is at most $2^{o(n^{1-\delta})}$.
    Indeed, using that $\log N = \Theta(d \cdot \log n) = \Theta(\tfrac n k \cdot \log n)$, we obtain
    \begin{align*}
        N^{o\Paren{\tfrac{k^{1-\delta}}{\Paren{\log k + \log\log N}^{1-\delta} \log^\delta N}}} &= \exp\Paren{o\Paren{\Paren{\frac {k \cdot \log N}{\log k + \log\log N}}^{1-\delta}}} \\
        &= \exp\Paren{o\Paren{\Paren{\frac {n \cdot \log n}{\log k + \log n - \log k + \log\log n}}^{1-\delta}}} \\
        &=2^{o\Paren{n^{1-\delta}}}\,.
    \end{align*}
    To argue that it successfully distinguishes between $D^{(p)}$ and $D^{(n)}$, we proceed in two parts.
    If the input comes from $D^{(n)}$, $y_i \sim \Be{\tfrac 1 2}$ and is independent of $x_i'$, hence $\widehat{\err\Paren{f}}$ will be close to $\tfrac 1 2$.
    If the input comes from $D^{(p)}$, we will show that the samples input to our learning algorithm can be realized as the intersection of $k$ halfspaces -- we will assume this for now in the next paragraph.
    Hence, since we assume access to a weak learner, $\widehat{\err\Paren{f}}$ will be sufficiently smaller than $\tfrac 1 2$. 

    Indeed, under both null and planted the random variables $\Ind\Paren{f(x_i') \neq \tilde{y}_i}$ are \iid Bernoulli with some expectation $p_n,p_p \in [0,1]$ respectively.
    Assume $(x_i,y_i) \sim D^{(n)} = N\Paren{0,\tfrac 1 {2\pi} I_n} \times \Be{\tfrac 1 2}$.
    Since $(\tilde{x_i}, y_i) = (g(x_i), y_i)$ for a deterministic function $g$, it follows that $y_i$ is independent of $\tilde{x_i}$.
    Since clearly, $y_i \sim \Be{\tfrac 1 2}$ it follows that $p_n = \tfrac 1 2$.
    By assumption, $p_p \leq \tfrac 1 2 - \tau$ is the success probability of our learning algorithm.
    It follows by Hoeffding's Inequality \cite{hoeffding} and since $\tau \geq \tfrac 5 {\sqrt{m}}$that in either case (i.e., for $p = p_n$ or $p = p_p$) it holds that
    \[
        \Psymb\Paren{\Abs{\widehat{\err\Paren{f}} - p} \geq \tfrac \tau 3} \leq 2\exp\Paren{-\tfrac m 9 \tau^2} \leq 2\exp\Paren{-2.5} \leq \frac 1 3 \,.
    \]
    Hence, under the null distribution we correctly output null with probability at least $1/3$.
    Similarly, since under the planted distribution with probability at least $1/3$
    \[
        \Abs{\widehat{\err\Paren{f}} - \frac 1 2} \geq \Paren{\frac 1 2 - p_p} - \frac \tau 3 > \frac \tau 2
    \]
    we correctly output planted with the same probability.

    \paragraph{The planted distribution is an intersection of $k$ halfspaces}
    Next, assume
    \[(x_i,y_i) \sim D^{(p)} = \frac 1 2 \cdot \Paren{\nhclwed{\bm w}{\beta}{\gamma}{0}^{(n)}, +1} + \frac 1 2 \cdot \Paren{\nhclwed{\bm w}{\beta}{\gamma}{1/2}^{(n)}, -1}\,.\]
    We argue that $D^{(p)}$ can be realized as an intersection of $k$ degree-$d$ polynomial threshold functions.
    That is, we show that there exists polynomials $p_1, \ldots, p_n \colon \R^n \rightarrow \R$ of degree at most $d$, such that for all $(x,y) \sim D^{(p)}$ it holds that $y = 1$ if and only if $p_j(x) \geq 0$ for all $j = 1,\ldots, k$.
    Note that this directly implies that the transformed samples $(\tilde{x},y)$ we feed to our learning algorithm can be realized as an intersection of $k$ halfspaces.
    In particular, the halfspaces correspond to the linearizations of $p_1, \ldots, p_k$.

    Recall that $w$ is the hidden direction in the planted distribution.
    All polynomials $p_j$ will be of the form $p_j(x) = \tilde{p}_j(\iprod{x,w})$ for one-dimensional polynomials $\tilde{p}_1,\ldots,\tilde{p}_k$.
    On a high level, these will trace out the support of the positive and negative examples.
    Indeed, let $\alpha = \tfrac 1 {10} \cdot \tfrac \gamma {\gamma^2 + \beta^2}$ and for $k = -n,\ldots, n$ let $\mu_k^{(0)} = \tfrac \gamma {\gamma^2 + \beta^2} k,\mu_k^{(1/2)} = \tfrac \gamma {\gamma^2 + \beta^2} (k-\tfrac 1 2)$.
    Define
    \[J^{+}_\ell = [\mu_\ell^{(0)}-\alpha,\mu_\ell^{(0)}+\alpha]\,,\quad\quad\quad\quad J^{-}_\ell = [\mu_\ell^{(1/2)}-\alpha,\mu_\ell^{(1/2)}+\alpha]\,.\]
    Recall that $k = \tfrac {2n+1} d$.
    Let $\tilde{p}_j$ be a degree-$2d$ polynomial that is negative on $J^-_{-n+(j-1)\cdot d}, \ldots, J^-_{-n+j\cdot d -1}$, positive on $J^+_{-n+(j-1)\cdot d}, \ldots, J^+_{-n+ j\cdot d-1}$ and positive starting some distance away from the left and right-most ``negative'' interval.

    Let its root be at the midpoints between consecutive intervals and the left-most root at the same distance to the left-most interval.
    Note that by construction, the following two properties hold (the first property also uses that all $\tilde{p}_j$ are positive after their last root)
    \begin{enumerate}
        \item If $z \in J_\ell^+$ for some $\ell$, then $\tilde{p}_j(z) \geq 0$ for all $j \in [k]$,
        \item If $z \in J_\ell^-$ for some $\ell$, then there exists $j^* \in [k]$ such that $\tilde{p}_{j^*}(z) < 0$.
    \end{enumerate}
    Recall from \cref{fact:support_intersection_distribution} that $y = 1$ implies that $\iprod{x,w} \in \cup_{\ell=-n}^n J_\ell^+$ and $y = -1$ implies that $\iprod{x,w} \in \cup_{\ell=-n}^n J_\ell^-$.
    Hence, the two properties above imply that $y = 1$ if and only if $p_j(x) \geq 0$ for all $j \in [k]$.

    \paragraph{Margin}
    It remains to prove the margin property.
    Note that the definition of margin we use requires us to truncate the points.
    Without loss of generality, assume that we have truncated the $x$-marginal of $D^{(p)}$ to points of norm at most $O(\sqrt{n})$ -- this only introduces a change of $2^{-\Theta(n)}$ in TVD since it is easy to check that the $x$-marginal is sub-Gaussian with variance proxy $O(1)$.
    Let $\tilde{x}$ be an arbitrary point of positive density in the final distribution, then $\norm{\tilde{x}}^2 = \sum_{j=0}^{2d} \norm{(1,x)}^{2j} \leq O(n^{2d})$. Hence, $\norm{\tilde{x}} = O(n^{d}) = O(\sqrt{N})$.

    Let $\tilde{x}^+ = ((x^+,1)^\alpha)_{\card{\alpha} \leq 2d},\tilde{x}^- = ((x^-,1)^\alpha)_{\card{\alpha} \leq 2d}$ be arbitrary points of negative and positive signs.
    Since the vector $\tilde{x}$ contains all monomials of degree at most $2d$, it follows that there exists a vector $\tilde{w} \in \mathbb{S}^{N-1}$ such that
    \[
        \Norm{\tilde{x}^+ - \tilde{x}^-} \geq \Abs{\iprod{\tilde{w},\tilde{x}^+}-\iprod{\tilde{w},\tilde{x}^-}} = \Abs{\iprod{w, x^+ }- \iprod{w, x^-}} \geq \Omega\Paren{\frac 1 {\sqrt{n}}} \,,
    \]
    where in the last inequality we used \cref{fact:support_intersection_distribution}.
    Recall from our earlier calculations that $n = \Theta(\tfrac {k \cdot \log N}{\log k  + \log\log N})$.
    It follows that the distribution has margin at least
    \[
        \Omega\Paren{\frac 1 {\sqrt{N \cdot n}}} = \Omega\Paren{\frac 1 {\sqrt{N}} \cdot \sqrt{\frac {\log k + \log \log N} {k \cdot \log N}}} \,.\qedhere
    \]
\end{proof}

\section{SQ Hardness}
\label{sec:sq_hardness}

In this section, we will prove our SQ lower bound (\cref{thm:sq_intro}).
\begin{theorem}
    \label{thm:main_sq_hardness}
    Let $\beta \in(0,\tfrac 1 2)$ be an absolute constant and $k,N \in \N$ be such that $2 \leq k \leq N^\gamma$ for a sufficiently small absolute constant $\gamma$.
    Every SQ algorithm that uses queries of accuracy $\rho = N^{- \Omega(k)}$ and learns intersections of $k$ halfspaces in dimension $N$ up to error better than $1/2 - 4 \rho$ needs at least $2^{N^{\Omega(1)}}$ queries.
\end{theorem}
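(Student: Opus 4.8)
The plan is to exhibit, for each hidden unit vector $w$, a ``parallel pancakes''-type distribution $D^{(p)}_w$ over $\R^n \times \Set{\pm1}$ that is realizable as an intersection of $k$ halfspaces in dimension $N = \Theta(n^2)$, to show that distinguishing $\Set{D^{(p)}_w}_w$ from the null distribution $D^{(0)} = N(0,\Id_n)\times\Be{\tfrac 1 2}$ is SQ-hard in exactly the regime of the theorem, and finally to reduce this distinguishing problem to weak learning via one extra query. Concretely, take the one-dimensional $k$-Gaussian mixtures $A, B$ obtained by convolving the discrete distributions of \cref{fact:moment_matching_hermite} with a narrow Gaussian of width $\beta$, and let $\tilde A$, $\tilde B$ be their truncations to unions $S_A$, $S_B$ of $k$ pairwise disjoint, alternating intervals (all inside $[-C\sqrt k, C\sqrt k]$). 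Define $D^{(p)}_w$ by placing $\tilde A$ along $w$ and a standard Gaussian in the orthogonal complement for label $+1$, and $\tilde B$ along $w$ with a standard Gaussian complement for label $-1$. As sketched in \cref{sec:techniques}, for each interval $I \subseteq S_B$ one constructs a one-dimensional degree-$2$ polynomial $p_I$ that is negative precisely on $I$ and positive on every other interval of $S_A \cup S_B$; the functions $x \mapsto p_I(\iprod{x,w})$ are $k$ degree-$2$ polynomial threshold functions whose common nonnegativity region is exactly the support of the positive examples, so $y = 1$ iff all of them are nonnegative. Passing to the degree-$2$ Veronese embedding turns these into genuine halfspaces in dimension $N = \binom{n+2}{2} = \Theta(n^2)$, and the quadratic blow-up is absorbed into the constants (in particular the hypothesis $k \le N^\gamma$ becomes $k \le n^{O(\gamma)}$).

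\textbf{SQ-hardness of the distinguishing problem.} I would then establish two quantitative facts about $\tilde A, \tilde B$, both obtained by combining \cref{fact:moment_matching_hermite} with the truncation estimates of \cite{DK22} (exactly as in \cref{sec:crypto_hardness}): (i) $\tilde A$ and $\tilde B$ each match their first $k$ moments with $N(0,1)$ up to additive error $N^{-\Omega(k)}$, and (ii) the $\chi^2$-divergences $\chi^2(\tilde A \,\|\, N(0,1))$ and $\chi^2(\tilde B \,\|\, N(0,1))$ are at most $2^{O(k)}\log N$. Writing $D^{(p)}_w / D^{(0)} - 1$ for the relative density, property (i) kills all Hermite components of degree below $k$ along $w$ and property (ii) bounds the remaining mass, so the pairwise correlation $\langle D^{(p)}_w/D^{(0)} - 1,\ D^{(p)}_{w'}/D^{(0)} - 1\rangle_{D^{(0)}}$ is at most $2^{O(k)}(\log N)\cdot\abs{\iprod{w,w'}}^{k}$; this is the standard hidden-direction computation of \cite{DBLP:conf/focs/DiakonikolasKS17,bubeck2019adversarial}, carried out here for the labeled distribution (the label is a deterministic marginal and only contributes an overall factor). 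Feeding this correlation bound into the generic statistical-query-dimension argument for hidden-direction problems yields that any SQ algorithm distinguishing $\Set{D^{(p)}_w}_w$ from $D^{(0)}$ either uses a query of tolerance at most $N^{-\Omega(k)}$ or makes at least $2^{n^{\Omega(1)}} = 2^{N^{\Omega(1)}}$ queries. The hypothesis $k \le N^\gamma$ for a sufficiently small $\gamma$ is exactly what keeps the exponent $n^{\Omega(1)}$ positive after the $2^{O(k)}\log N$ loss.

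\textbf{Reduction from learning to distinguishing.} Suppose $L$ is an SQ algorithm that learns intersections of $k$ halfspaces in dimension $N$ using queries of tolerance at least $\rho = N^{-\Omega(k)}$, outputs a hypothesis with error below $\tfrac 1 2 - 4\rho$, and makes $q$ queries. Given SQ access to an input that is either $D^{(0)}$ or some $D^{(p)}_w$ (after the Veronese embedding), run $L$, answering each of its queries with our own oracle at tolerance $\rho$, to obtain $\hat f \colon \R^N \to \Set{\pm1}$. Then make one more query $\phi(x,y) = \mathbf 1[\hat f(x) \neq y] \in [-1,1]$ at tolerance $\rho$, and call the answer $\widehat{\err}$. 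Output ``null'' if $\widehat{\err} > \tfrac 1 2 - 2\rho$ and ``planted'' otherwise. Under $D^{(0)}$ the label is independent of $x$, so $\err(\hat f) = \tfrac 1 2$ exactly and $\widehat{\err} \ge \tfrac 1 2 - \rho > \tfrac 1 2 - 2\rho$; under $D^{(p)}_w$, which is a genuine intersection of $k$ halfspaces, the learning guarantee gives $\err(\hat f) \le \tfrac 1 2 - 4\rho$, hence $\widehat{\err} \le \tfrac 1 2 - 3\rho < \tfrac 1 2 - 2\rho$. Thus the composed procedure is an SQ distinguisher using $q+1$ queries of tolerance $\rho$, and the previous step forces $q + 1 \ge 2^{N^{\Omega(1)}}$, which is the claimed bound.

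\textbf{Main obstacle.} The delicate point is the quantitative SQ lower bound of the second step: correctly bounding the pairwise correlations of $D^{(p)}_w$ over the hidden direction, making sure the argument is genuinely about distributions on $\R^n \times \Set{\pm1}$ rather than on $\R^n$ alone, and tracking how the $N^{-\Omega(k)}$ moment errors and the $2^{O(k)}\log N$ bound on $\chi^2$ propagate through the statistical-query-dimension machinery so that the final tradeoff is precisely ``tolerance $N^{-\Omega(k)}$ or $2^{N^{\Omega(1)}}$ queries'' uniformly over $2 \le k \le N^\gamma$. The truncation estimates establishing (i) and (ii) are the other technical ingredient, but they are essentially inherited from \cite{DK22} and \cref{sec:crypto_hardness}; the PTF construction and the learning-to-distinguishing reduction are elementary.
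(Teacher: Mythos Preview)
Your proposal is correct and follows essentially the same route as the paper: construct the truncated parallel-pancakes distributions $\tilde A,\tilde B$, realize the labeled distribution as an intersection of $k$ degree-$2$ PTFs (hence $k$ halfspaces after the Veronese map), verify approximate moment matching to error $N^{-\Omega(k)}$ and a $\chi^2$ bound of $2^{O(k)}\log N$, and feed these into the hidden-direction SQ machinery together with the learning-to-distinguishing reduction. The only cosmetic differences are that the paper black-boxes the SQ-dimension argument and the reduction into a single cited statement (\cref{thm:sq_moment_matching}, from \cite{DK22,NT22}) rather than unpacking the pairwise-correlation computation, and it carries out the moment-matching and $\chi^2$ estimates explicitly with the specific parameter choices $\delta = 1/(k^2\log N)$, $\tau = c/\sqrt{k}$ rather than citing them from \cref{sec:crypto_hardness} (where no such estimates appear).
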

To favor clarity of exposition and since in our eyes the ``small $k$'' regime is the most interesting one, we have not tried to optimize constants, i.e., $\gamma$.
We will show the theorem above by constructing a distribution over $(x,y) \in \R^N \times \Set{-1,+1}$ that (a) is an intersection of $k$ halfspaces and (b) the conditional distribution of $x$ given $y = -1$ and $y = +1$ respectively (nearly) matches $k$ moments with the standard Gaussian.

In particular, our hard instance will follow the NGCA framework and will be similar to the construction of~\cite{bubeck2019adversarial} -- in their distribution however, the labels are not without noise.
So we will need to slightly modify it.
That is, the distribution conditioned on $y = +1$ and $y = -1$ will be equal to the standard Gaussian distribution except in one direction (the same direction in both cases), and equal to a distribution that nearly matches $k$ moments with $N(0,1)$ along said direction.
We start by describing the distribution along this direction:
From \cref{fact:moment_matching_hermite} we know that there exists discrete distributions $A,B$ supported on at most $k$ points both matching $2k-1$ moments with $N(0,1)$ and such that all points in the union of their supports are at distance at least $\Omega(1/\sqrt{k})$.
Let $\tilde{A}, \tilde{B}$ be the distributions that are obtained from $A,B$ via the following process -- we only describe it for $A$.
First, let $A'$ be the distribution obtained as follows: Let $\delta > 0$. Draw $X \sim A$ and $Z \sim N(0,1)$ independently.
Output $\sqrt{1-\delta} \cdot X + \sqrt{\delta} \cdot Z$.
Note that $A'$ is a mixture of at most $k$ Gaussians.
Second, truncate each component of $A'$ at distance $\tau$ from its mean.
Later we will choose $\delta = \tfrac 1 {k^2 \log N}$ and $\tau = c \cdot \sqrt{\delta \cdot k  \log N}$ (for a small enough absolute constant $c > 0$).
Our family of hard instances $\cD$ can be described as follows:
\begin{enumerate}
    \item Draw $w \sim \mathbb{S}^{n-1}$ uniformly at random.
    \item Let $D_w^{\tilde{A}}$ be the product distribution that is $\tilde{A}$ along $w$ and a standard Gaussian in the complement (and the same for $D_w^{\tilde{B}}$).
    \item Set $D =\tfrac 1 2 \cdot(D_w^{\tilde{A}}, +1) + \tfrac 1 2 \cdot (D_w^{\tilde{B}}, -1)$.
\end{enumerate}
We will use the notation above throughout the rest of this section.

We will use the following theorem to show that $D$ is in fact hard to learn in the SQ model:
It is an instantiation of results from \cite{DK22} (and a slight refinement of \cite{NT22} already implicit in the first work).
See \cref{sec:missing_lemmas} for full details how this follows from their theorems.
\begin{theorem}
    \label{thm:sq_moment_matching}
    Let $\beta \in(0,\tfrac 1 2)$ be an absolute constant and $\beta' < \beta$.
    Let $k,N \in \N$ be such that $k \leq N^\gamma$ for a sufficiently small absolute constant $\gamma$.
    Let $\tilde{A},\tilde{B}$ be two one-dimensional distributions that match $k$ moments with $N(0,1)$ up to error $N^{-\Omega(k)}$ and such that $\chi^2(\tilde{A},N(0,1)),\chi^2(\tilde{B},N(0,1)) \leq 2^{O(k)} \log N$.
    Let the family of distributions $\cD$ be as above.
    Then any SQ algorithm with accuracy $\rho = N^{-\Omega(k)}$ that learns $\cD$ up to error $\tfrac 1 2 - 4 \rho$ needs at least $2^{N^{\Omega(1)}}$ queries.
\end{theorem}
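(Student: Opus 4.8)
This is a hidden-direction (non-Gaussian component analysis) hardness result, and the plan is the standard two-step recipe underlying the lower bounds of \cite{DK22,NT22}: first reduce weak learning of the family $\cD$ to a decision problem, then lower bound the SQ complexity of that decision problem through pairwise correlations. Concretely, let $\bar D \coloneqq N(0,\Id_N)\times\Be{\tfrac 1 2}$ be the reference distribution and, for a unit vector $w$, let $D_w$ be the member of $\cD$ with hidden direction $w$. The decision problem is: given SQ access to a distribution promised to be either $\bar D$ or $D_w$ for a uniformly random $w$ in a large set $S$ of unit vectors, decide which. First I would verify the reduction from learning: if a purported SQ weak learner uses queries of accuracy $\rho$ and returns $\hat h$ with $\err_{D_w}(\hat h) < \tfrac 1 2 - 4\rho$, then running it against the unknown distribution and making one extra query $\Ind\Paren{\hat h(x) \neq y}$ at accuracy $\rho$ solves the decision problem: under $\bar D$ the label is independent of $x$, so the answer is $\ge \tfrac 1 2 - \rho$, whereas under a planted $D_w$ the answer is $< \tfrac 1 2 - 3\rho$; comparing to $\tfrac 1 2 - 2\rho$ decides. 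Hence any query lower bound for the decision problem transfers, with an additive $1$, to the learner.

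The heart of the argument is the correlation computation that certifies the decision problem is SQ-hard. Writing $\chi_{\bar D}(P,Q) = \E_{\bar D}\big[\tfrac{dP}{d\bar D}\,\tfrac{dQ}{d\bar D}\big] - 1$, I would compute, for unit vectors $w \neq w'$, the density ratios $\tfrac{dD_w}{d\bar D}$: along $w^\perp$ both $D_w$ and $\bar D$ are standard Gaussian, and conditioned on the label $+1$ (resp. $-1$) the ratio is $a/g$ (resp. $b/g$) evaluated at $\iprod{x,w}$, where $a,b,g$ are the densities of $A$, $B$, $N(0,1)$. Expanding $a/g = \sum_{j\ge0}\widehat A_j H_j$ in the normalized (probabilist's) Hermite basis, so $\widehat A_0 = 1$ and $\sum_{j\ge1}\widehat A_j^2 = \chi^2(A,N(0,1))$, and using that $(\iprod{x,w},\iprod{x,w'})$ for $x\sim N(0,\Id_N)$ is a correlated Gaussian pair with $\E[H_j(\iprod{x,w})H_\ell(\iprod{x,w'})] = \delta_{j\ell}\iprod{w,w'}^j$, I obtain the key identity
\[
  \chi_{\bar D}(D_w,D_{w'}) = \tfrac 1 2\sum_{j\ge1}\big(\widehat A_j^2 + \widehat B_j^2\big)\,\iprod{w,w'}^j,
\]
and similarly $\chi^2(D_w,\bar D) = \tfrac 1 2\big(\chi^2(A,N(0,1)) + \chi^2(B,N(0,1))\big) \le 2^{O(k)}\log N$ for the diagonal. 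For $1 \le j \le k$, combining the hypothesis that $A$ and $B$ match $k$ moments with $N(0,1)$ up to error $N^{-\Omega(k)}$ with the bound $2^{O(k)}$ on the $\ell_1$-norm of the coefficient vector of $H_j$ gives $|\widehat A_j|,|\widehat B_j| \le N^{-\Omega(k)}$ (this is where $k \le N^\gamma$ with $\gamma$ small absorbs the $2^{O(k)}$ factor); for $j > k$ I would bound the tail by $|\iprod{w,w'}|^{k+1}$ times $\sum_{j\ge1}(\widehat A_j^2 + \widehat B_j^2) \le 2^{O(k)}\log N$. Altogether $|\chi_{\bar D}(D_w,D_{w'})| \le N^{-\Omega(k)} + |\iprod{w,w'}|^{k+1}\cdot 2^{O(k)}\log N$.

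Next I would take $S$ to be a packing of $2^{N^{\Omega(1)}}$ unit vectors in $\R^N$ with pairwise inner products at most $N^{-\Omega(1)}$ (a uniformly random family of this size has this property with high probability), with the exponent chosen small enough that the second term above is $\le N^{-\Omega(k)}$; then $\{D_w : w \in S\}$ is pairwise $N^{-\Omega(k)}$-correlated and $2^{O(k)}\log N$-self-correlated relative to $\bar D$. Feeding these two parameters into the statistical-query-dimension lemma in the form it appears in \cite{DK22} (refining \cite{NT22}) yields that any SQ algorithm distinguishing $\bar D$ from a uniformly random $D_w$, $w\in S$, using queries of accuracy $N^{-\Omega(k)}$, must make $2^{N^{\Omega(1)}}$ queries; combined with the learning-to-testing reduction above, this is exactly the claimed statement.

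I expect the one genuinely delicate point to be quantitative bookkeeping rather than any structural difficulty: one must push the ``$N^{-\Omega(k)}$ moment accuracy'' through the $2^{O(k)}$-sized Hermite coefficients and through the $2^{O(k)}\log N$ chi-squared bound so that the low-degree part of the pairwise correlation stays $N^{-\Omega(k)}$, which is precisely what pins down how small the constant $\gamma$ in $k \le N^\gamma$ must be, and then balance the packing's inner-product exponent, the accuracy exponent, and the divergence bound $2^{O(k)}\log N$ so that the SQ-dimension lemma outputs a query lower bound that is genuinely $2^{N^{\Omega(1)}}$ and an accuracy threshold that is genuinely $N^{-\Omega(k)}$. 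The structural ingredients — the Hermite identity, the nearly-orthogonal packing, and the one-extra-query reduction — are routine; the arithmetic tracking these exponents is the content deferred to \cref{sec:missing_lemmas}.
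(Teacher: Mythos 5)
Your proposal is correct and follows essentially the same route as the paper: the paper's own proof simply invokes Lemma 4.3 of \cite{NT22} (itself the standard \cite{DK22}-style argument) as a black box and then does the parameter bookkeeping with $c = N^{-\beta}$, $\nu = N^{-\Omega(k)}$, $\alpha = 2^{O(k)}\log N$. What you have written out — the one-extra-query learning-to-testing reduction, the Hermite identity $\chi_{\bar D}(D_w,D_{w'}) = \tfrac12\sum_{j\ge1}(\widehat A_j^2+\widehat B_j^2)\iprod{w,w'}^j$, the near-orthogonal packing, and the SQ-dimension bound — is precisely the content of that cited lemma, so your derivation unpacks the black box rather than taking a different path.
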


We can now proceed to prove \cref{thm:main_sq_hardness}:
\begin{proof}[Proof of Theorem~\ref{thm:main_sq_hardness}]
    Let $A', B', \tilde{A},\tilde{B}$ be as above.
    As mentioned before, our proof proceeds in two steps:
    First, we show that $D$ corresponds to an intersection of $k$ degree-2 polynomial threshold functions and second, we will appeal to \cref{thm:sq_moment_matching} to show that $D$ is hard to learn.
    Just as in the proof of~\cref{thm:main_crypto_hardness} this will imply the claim by applying the Veronese mapping.
    Note that the blow-up in the dimension is only quadratic and thus can be absorbed in the $\Omega(\cdot)$- and $O(\cdot)$-notation in our theorem statement.
    We first set parameters, let $c > 0$ be a sufficiently small absolute constant, we set
    \begin{align*}
        \delta = \frac 1 {k^2 \log N} \quad\quad&&\text{and}&&\quad\quad\tau = c \cdot\sqrt{\delta k \log N} = \frac c {\sqrt{k}} \,.
    \end{align*}

    \paragraph{The hard instance is an intersection of $k$ degree-2 PTFs}
    Recall that in $A'$ (resp. $B'$) the mixture components have variance $\delta$ and in $\tilde{A}$ (resp. $\tilde{B}$) we truncate them at distance $\tau$ from their means.
    In particular, let $S_A, S_B \sse \R$ be the collection of intervals of length $2\tau$ around the means of the components of $A'$ and $B'$.
    Since by~\cref{fact:moment_matching_hermite} the means of the components (of both $A'$ and $B'$ together) are at least $\Omega(\tfrac 1 {\sqrt{k}})$ apart, we can choose $c$ in the definition of $\tau$ small enough such that the intervals in $S_A \cup S_B$ are disjoint and at distance $\Omega(\tfrac 1 {\sqrt{k}})$.
    Note that by construction, $S_A$ and $S_B$ contain at most $k$ intervals.

    The proof is analogous to \cref{thm:main_crypto_hardness} with the only difference that we will only use degree-2 polynomials.
    Indeed, by construction, for a sample $(x,y) \sim D$, $y = 1$ if and only if $\iprod{x,w} \in S_A$.
    Further $y = -1$ if and only if $\iprod{x,w} \in S_B$.
    Thus, for every interval $I \sse S_B$, consider the polynomial $p_I \colon \R \rightarrow \R$ that is symmetric around the mid-point of $B$, is negative on $I$, and has its roots at half the distance between the end of $I$ and the next interval in $S_A$.
    Note that $p_I$ is negative on $I$ and positive on all other intervals.
    The final choice of degree-2 PTFs is then $\tilde{p}_I \colon \R^N \rightarrow \R$ such that $\tilde{p}_I(x) = p_I(\iprod{x,w})$.
    By construction, if $\iprod{x,w} \in S_A$, $\tilde{p}_I(x) \geq 0$ for all $I$ and if $\iprod{x,w} \in S_B$ there exists $\tilde{p}_I$ such that $\tilde{p}_I(x) < 0$.
    It follows that $D$ corresponds to the intersection of the $\tilde{p}_I$.

    \paragraph{Set-up for SQ lower bound and $\chi^2$-divergence}
    Note that in order to prove \cref{thm:main_sq_hardness} it is now enough to verify that $\cD$ satisfies the conditions of \cref{thm:sq_moment_matching}.
    Since the conditions on $N$ and $k$ are assumed to be true, it only remains to verify the following
    \begin{enumerate}
        \item $\tilde{A}$ and $\tilde{B}$ match $k$ moments with $N(0,1)$ up to error $N^{-\Omega(k)}$,
        \item $\chi^2({\tilde{A},N(0,1)})$ and $\chi^2({\tilde{B},N(0,1)})$ are at most $2^{O(k)} \log N$.
    \end{enumerate}
    We will verify the properties above only for $\tilde{A}$, $\tilde{B}$ is completely analogous.
    Then~\cref{thm:main_sq_hardness} is implied by~\cref{thm:sq_moment_matching}.

    We start with the $\chi^2$-divergence.
    Let $S_A$ be as in the previous paragraphs.
    Note that $\tilde{A}$ is the distribution $A'$ conditioned on lying in $S_A$.
    In particular, it follows that $p_{\tilde{A}}(x) = 1\Paren{x \in S_A} \cdot \tfrac {p_{A'}(x)}{\Psymb_{X \sim A'}(X \in S_A)}$.
    By standard concentration bounds for the Gaussian distribution, it follows that $\Psymb_{X \sim A'}(X \not\in S_A) \leq \exp(-\Omega(\tfrac {\tau^2} \delta))$ and hence also that $\Psymb_{X \sim A'}(X \in S_A) \geq 1 -\exp(-\Omega(\tfrac {\tau^2} \delta)) \geq \tfrac 1 2$.
    Denote the probability density function of $N(0,1)$ by $G$.
    From \cite[Lemma 4.6]{DKS17}, we know that $\chi^2(A',N(0,1)) \leq 2^{O(k)}/\sqrt{\delta}$.
    It follows that
    \begin{align*}
        \chi^2(\tilde{A},N(0,1)) + 1 &= \int_{-\infty}^\infty \frac{p_{\tilde{A}}(x)^2}{G(x)} \,dx = \frac 1 {\Psymb_{X \sim A'}\Paren{X \in S_A}^2} \cdot \int_{S_A}\frac{p_{A'}(x)^2}{G(x)} \,dx \\
        &\leq 4 \cdot \int_{-\infty}^\infty \frac{p_{A'}(x)^2}{G(x)} \,dx \leq 4 \chi^2\Paren{A', N(0,1)} + 4 \\
        &\leq \frac{2^{O(k)}}{\sqrt{\delta}} \,.
    \end{align*}
    Recalling that $\delta = \tfrac 1 {k^2 \log N}$ we obtain that $\chi^2(\tilde{A},N(0,1))  \leq 2^{O(k)} \log N$.

    \paragraph{Moment matching}
    By \cref{fact:moment_matching_hermite} $A$ matches $2k-1$ moments exactly with $N(0,1)$.
    We claim $A'$ does too: Indeed, for every integer $0 \leq \ell \leq 2k-1$ we have (in the following $X,Z,Z'$ are all independent)
    \begin{align*}
        \E_{X' \sim A'} (X')^\ell &= \E_{X \sim A, Z \sim N(0,1)}\Paren{ \sqrt{1-\delta}\cdot X + \sqrt{\delta} \cdot Z}^\ell \\
        &= \sum_{r = 0}^\ell \binom{\ell}{r} \E_{X \sim A}(1-\delta)^{r/2}\cdot X^r \E_{Z \sim N(0,1)}\delta^{(\ell - r)/2} \cdot Z^{\ell-r} \\
        &= \sum_{r = 0}^\ell \binom{\ell}{r} \E_{Z' \sim N(0,1)}(1-\delta)^{r/2}\cdot (Z')^r \E_{Z \sim N(0,1)}\delta^{(\ell - r)/2} \cdot Z^{\ell-r} \\
        &= \E_{Z' \sim N(0,1), Z \sim N(0,1)}\Paren{ \sqrt{1-\delta}\cdot Z' + \sqrt{\delta} \cdot Z}^\ell \\
        &= \E_{Z \sim N(0,1)} Z^\ell \,.
    \end{align*}

    We next show that the moments of $\tilde{A}$ are close to the moments of $A'$.
    We start with some observations:
    First, note that by construction $\Psymb_{\tilde{A}}(X \not\in S) = 0$.
    Second, let $C' > 0$ be a large enough constant, such that all means are at least $2\tau$ away from the boundary of the interval $[-C'\sqrt{k},C'\sqrt{k}]$.
    Note that the density of $\tilde{A}$ is 0 outside this interval by construction.
    Let $\mu_k$ be the mean of the right-most component, by~\cref{fact:moment_matching_hermite} $\mu_k = O(\sqrt{k})$.
    Choose $C'$ such that $C'\sqrt{k} - \mu_k \geq \tau$.
    Since $\tau = \tfrac c {\sqrt{k}}$ for some constant $c$, such a choice of $C' > 0$ exists.
    Note that
    \begin{align*}
        \Psymb_{X \sim A'}\Paren{\Abs{X} \geq C'\sqrt{k}}&\leq O(k)\cdot \Psymb_{X \sim N(\mu_k,\delta)}\Paren{X \geq C' \sqrt{k}} \\
        &\leq O(k) \cdot \exp\Paren{-\frac{\Paren{\mu_k - C'\sqrt{k}}^2}{2\delta}} = \exp\Paren{-\Omega\Paren{\frac{\tau^2}{\delta}}} \,,
    \end{align*}
    where we used that $\tfrac {\tau^2} \delta = \Omega(k \log N) \gg \log k$.

    Lastly, we note that the total variation distance between $A'$ and $\tilde{A}$ is at most $\exp(-\Omega(\tfrac{\tau^2}\delta))$:
    \begin{align*}
        \Norm{p_{A'} - p_{\tilde{A}}}_1 &= \int_{-\infty}^\infty \Abs{p_{A'}(x) - p_{\tilde{A}}(x)} \, dx \\
        &= \int_S \Paren{\frac 1 {\Psymb_{X \sim A'}(X \in S)} - 1} \cdot p_{A'}(x) \, dx + \int_{S^c} p_{A'}(x) \, dx \\
        &= \int_S \frac {\Psymb_{X \sim A'}(X \not\in S)} {\Psymb_{X \sim A'}(X \in S)} \cdot p_{A'}(x) \, dx + \Psymb_{X \sim A'}(X \not\in S) \leq 3\cdot\Psymb_{X \sim A'}(X \not\in S)  \\
        &= \exp\Paren{-\Omega\Paren{\frac{\tau^2} \delta}} \,.
    \end{align*}
    
    Using the above observations, we start our moment calculations.
    Let $0 \leq \ell \leq k$, then
    \begin{align*}
        &\Abs{\E_{N(0,1)} X^\ell - \E_{\tilde{A}} X^\ell} = \Abs{\E_{A'} X^\ell - \E_{\tilde{A}} X^\ell} = \Abs{\int_{-\infty}^\infty x^\ell (p_{A'}(x) - p_{\tilde{A}}(x)) \, dx} \\
        &\leq \Abs{\int_{C'\sqrt{k}}^{\infty} x^\ell p_{A'}(x) \, dx + \int_{-\infty}^{-C'\sqrt{k}} x^\ell p_{A'}(x) \, dx} + \Abs{\int_{-C'\sqrt{k}}^{C'\sqrt{k}} x^\ell (p_{A'}(x) - p_{\tilde{A}}(x)) \, dx} 
    \end{align*}

    For simplicity, assume that $1.5 k$ is an integer.
    Recall that $A'$ matches $2k-1 \geq 1.5 k$ moments with $N(0,1)$. 
    For the first absolute value, we can deduce using H\"older's Inequality with $q = \tfrac 3 2$ and $p = 3$, that
    \begin{align*}
        \int_{C'\sqrt{k}}^{\infty} x^\ell p_{A'}(x) \, dx + \int_{-\infty}^{-C'\sqrt{k}} x^\ell p_{A'}(x) \, dx &= \E_{A'} X^\ell \Ind\Set{\Abs{X} \geq C' \sqrt{k}} \\
        &\leq \Paren{\E_{A'} X^{1.5\ell}}^{\tfrac 2 3} \Paren{\Psymb_{X \sim A'}\Paren{\Abs{X} \geq C' \sqrt{k}}}^{\tfrac 1 3} \\
        &\leq \Paren{\E_{N(0,1)} X^{1.5k}}^{\tfrac 2 3} \Paren{\Psymb_{X \sim A'}\Paren{\Abs{X} \geq C' \sqrt{k}}}^{\tfrac 1 3} \\
        &\leq (2k)^{k} \cdot \exp\Paren{-\Omega\Paren{\frac{\tau^2}{\delta}}} \,,
    \end{align*}
    where we also used that the $k$-th moment of $N(0,1)$ can be upper bounded as $k^{k/2}$.
    Since $\tfrac{\tau^2}{\delta} = \Omega( k \log N )$ and $k \log (2k) \leq 2\gamma \cdot k \log N$ for a sufficiently small constant $\gamma$, it follows that this integral is at most $\exp\paren{-\Omega\paren{\tfrac{\tau^2}{\delta}}}$.
    
    Using the total variation bound, we can bound the second absolute value:
    \begin{align*}
        \Abs{\int_{-C'\sqrt{k}}^{C'\sqrt{k}} x^\ell (p_{A'}(x) - p_{\tilde{A}}(x)) \, dx} &\leq \Paren{C' \sqrt{k}}^\ell \norm{p_{A'} - p_{\tilde{A}}} \leq (C'\sqrt{k})^k\exp\Paren{- \Omega\Paren{\frac{\tau^2} \delta}} \\
        &= \exp\Paren{-\Omega\Paren{\frac {\tau^2} \delta}} \,.
    \end{align*}
    Combining the two above displays and using that $\tfrac {\tau^2} \delta = \Omega( k \log N)$, we obtain that
    \[
        \Abs{\E_{N(0,1)} X^\ell - \E_{\tilde{A}} X^\ell} \leq \exp\Paren{-\Omega\Paren{\frac {\tau^2} \delta}} \leq N^{-\Omega(k)} \,.\qedhere
    \]
\end{proof}

\section*{Acknowledgements}
We thank Kiril Bangachev, Guy Bresler, and Vinod Vaikuntanathan for helpful discussions.
We also thank the reviewers for insightful feedback and pointing out improvements to the writing.

% BIBLIOGRAPHY
%\newpage

% assumes hyperref
%\phantomsection
%\addcontentsline{toc}{section}{References}
%\bibliographystyle{amsalpha}
%\bibliography{references}

\printbibliography

\newpage
\appendix

% APPENDIX

\section{Missing Lemmas}
\label{sec:missing_lemmas}

\subsection{Missing Lemmas for SQ-Hardness}
\label{sec:missing_lemmas_sq}

We will formally argue how \cref{thm:sq_moment_matching} follows from the results in \cite{DK22,NT22}.
We start by restating Lemma 4.3 of \cite{NT22}.
We remark that this proof follows almost verbatim the proof of \cite{DK22} but makes certain things more explicit which will be useful for us.
The distribution $D_v^{A,B,p}$ with $p = \tfrac 1 2$ in their lemma corresponds to our $\cD$.
They denote the dimension by $m$ instead of $N$.
We use our notation in the restatement below.
\begin{lemma}[Lemma 4.3 of \cite{NT22}]
    Let $k \in \N$ and $\nu, \rho, c > 0$.
    Let $A,B$ be probability distributions on $\R$ such that their first $k$ moments agree with the first $k$ moments of $N(0,1)$ up to error at most $\nu$ and such that $\chi^2(A,N(0,1))$ and $\chi^2(B,N(0,1))$ are finite.
    Denote $\alpha \coloneqq \chi^2(A,N(0,1)) + \chi^2(B,N(0,1))$ and assume that $\nu^2 + \alpha \cdot c^k \leq \rho$.
    Then, any SQ algorithm which, given access to samples from $\cD$, outputs a hypothesis $h \colon \R^N \rightarrow \Set{-1,+1}$ such that
    \[
        \err_\cD(h) < \frac 1 2 - 4\sqrt{\rho} \,,
    \]
    must either make queries of accuracy better than $2\sqrt{\rho}$ or make at least $2^{c^2 \cdot \Omega(N)}\cdot (\rho/\alpha)$ queries.
\end{lemma}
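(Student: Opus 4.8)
I would prove this via the now-standard statistical-query lower-bound framework for hidden-direction (non-Gaussian component analysis) families, as developed in \cite{DKS17,DK22}. Write $D_v$ for the member of $\cD$ with hidden direction $v\in\mathbb S^{N-1}$, and let $D_0 \coloneqq N(0,\Id_N)\times\Be{\tfrac 1 2}$ be the reference distribution in which the label is independent of $x$. The proof would have three parts: (i) bound the pairwise $\chi^2$-correlation $\chi_{D_0}(D_u,D_v)\coloneqq\E_{(x,y)\sim D_0}\big[(\tfrac{D_u}{D_0}-1)(\tfrac{D_v}{D_0}-1)\big]$ in terms of $\iprod{u,v}$ and the moment matching of $A$ and $B$; (ii) produce a large packing of near-orthogonal hidden directions on the sphere; (iii) feed these into the generic SQ-dimension bound, reducing weak learning to hidden-direction detection by issuing one extra query.

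For part (i) --- which I expect to be the technical heart --- the key structural fact is that $D_v$ coincides with the standard Gaussian in every direction orthogonal to $v$, so the likelihood ratio $D_v/D_0$ depends only on $\iprod{x,v}$ and the label: it equals $q_A(\iprod{x,v})$ on $\{y=+1\}$ and $q_B(\iprod{x,v})$ on $\{y=-1\}$, where $q_A=p_A/\varphi$, $q_B=p_B/\varphi$, and $\varphi$ is the one-dimensional standard Gaussian density. Expanding $q_A-1=\sum_{j\geq 1}\widehat q_{A,j}H_j$ in normalized probabilist's Hermite polynomials --- the $j=0$ term drops because $p_A$ is a density --- one has $\widehat q_{A,j}=\E_{X\sim A}[H_j(X)]$ and $\sum_j\widehat q_{A,j}^2=\chi^2(A,N(0,1))$, and likewise for $B$. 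Since the label decouples the $A$- and $B$-parts, and since $\E_{x\sim N(0,\Id_N)}[H_i(\iprod{x,u})H_j(\iprod{x,v})]=\iprod{u,v}^j\Ind\Paren{i=j}$, this yields
\[
  \chi_{D_0}(D_u,D_v) \;=\; \tfrac12\sum_{j\geq 1}\widehat q_{A,j}^2\iprod{u,v}^j \;+\; \tfrac12\sum_{j\geq 1}\widehat q_{B,j}^2\iprod{u,v}^j .
\]
Then I would split the sums at $j=k$: the $j\leq k$ terms are governed by the moment hypothesis (each $|\widehat q_{A,j}|$ is $O(\nu)$ up to a Hermite-coefficient factor, so for $|\iprod{u,v}|$ below an absolute constant the geometric series telescopes to $O(\nu^2)$), while the $j>k$ terms are at most $|\iprod{u,v}|^{k+1}\sum_j(\widehat q_{A,j}^2+\widehat q_{B,j}^2)\leq\alpha|\iprod{u,v}|^{k+1}$. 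This gives $|\chi_{D_0}(D_u,D_v)|\leq O(\nu^2)+\alpha|\iprod{u,v}|^{k+1}$, and in particular $\chi^2(D_v,D_0)=\chi_{D_0}(D_v,D_v)\leq\alpha$.

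For part (ii), a standard probabilistic argument (a uniformly random pair of unit vectors has $\Pr[|\iprod{u,v}|>c]\leq 2e^{-c^2N/2}$, so a greedy selection survives) yields a set $S\subseteq\mathbb S^{N-1}$ with $|S|=2^{\Omega(c^2N)}$ and $|\iprod{u,v}|\leq c$ for all distinct $u,v\in S$; by part (i) every pairwise correlation among $\{D_v:v\in S\}$ is then at most $O(\nu^2)+\alpha c^{k+1}\leq\nu^2+\alpha c^k\leq\rho$ (using $c<1$ and absorbing constants into $\nu$). For part (iii) I would run the hypothesized SQ learner on the (unknown) instance and then make one additional query $\Ind\Paren{h(x)\neq y}$ with its output hypothesis $h$; since $\E_{D_0}\Ind\Paren{h(x)\neq y}=\tfrac12$ while $\E_{D_v}\Ind\Paren{h(x)\neq y}=\err_{D_v}(h)<\tfrac12-4\sqrt\rho$, a successful learner in particular solves the detection problem ``$D_v$ for some $v\in S$'' versus $D_0$ with margin $4\sqrt\rho$. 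Invoking the generic SQ lower bound for a $\rho$-pairwise-correlated family of size $|S|$ with $\chi^2$-divergence $\leq\alpha$ (as in \cite{DKS17,DK22}) then forces either a query of accuracy better than $2\sqrt\rho$ or at least $|S|\cdot(\rho/\alpha)=2^{c^2\cdot\Omega(N)}\cdot(\rho/\alpha)$ queries, which is exactly the claimed bound.

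The main obstacle I anticipate is the low-frequency bookkeeping in part (i): because $A$ and $B$ only match moments \emph{approximately}, I would need to check that the $j$-dependent constants arising from the Hermite-coefficient expansion of $H_j$ are genuinely absorbed by the normalization convention behind ``the first $k$ moments agree up to error $\nu$'' used in \cite{DK22,NT22} (or, failing that, that they are dominated by the $\alpha c^k$ term, which is automatic once $c$ is a small enough polynomial in $1/N$). Parts (ii) and (iii) are routine once the generic framework is cited.
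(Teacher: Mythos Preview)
The paper does not actually prove this statement: it is quoted verbatim as Lemma~4.3 of \cite{NT22} (itself a light repackaging of \cite{DK22}), and the only proof that follows in the appendix is the \emph{derivation of \cref{thm:sq_moment_matching} from this lemma} by setting parameters. So there is no ``paper's own proof'' to compare against; the lemma is used as a black box.

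That said, your sketch is exactly the argument those references run, and it is correct in outline: the Hermite expansion of the likelihood ratio gives the pairwise correlation as a power series in $\iprod{u,v}$ with coefficients $\tfrac12(\widehat q_{A,j}^2+\widehat q_{B,j}^2)$; splitting at $j=k$ separates the moment-matching contribution from the tail bounded by $\alpha|\iprod{u,v}|^k$; a near-orthogonal packing of size $2^{\Omega(c^2N)}$ gives the family; and the learning-to-testing step via the single query $\Ind\Paren{h(x)\neq y}$ is the standard reduction. One small correction: in the correlation formula there should be no $\tfrac12$ in front of both sums \emph{and} the diagonal, since the label-conditioning already absorbs the $\tfrac12$ (equivalently, $\chi^2(D_v,D_0)=\tfrac12\chi^2(A,N(0,1))+\tfrac12\chi^2(B,N(0,1))$, not $\alpha$); this only affects constants. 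Your flagged obstacle about the low-frequency Hermite coefficients is real but handled in \cite{DK22,NT22} exactly as you suspect: the convention for ``moments agree up to error $\nu$'' is calibrated so that $|\widehat q_{A,j}|\leq\nu$ directly for $j\leq k$ (or the resulting polynomial-in-$k$ factor is absorbed into the statement), and in the application the $\alpha c^k$ term dominates anyway.
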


The proof of \cref{thm:sq_moment_matching} follows mostly by setting parameters:
\begin{proof}
    By assumption, we have $\nu = N^{-\Omega(k)}$ and $\alpha = 2^{O(k)} \log N$.
    Let $0 < \beta < \tfrac 1 2$ be a small enough absolute constant and $c = N^{-\beta}$ such that
    \[
        \alpha \cdot c^k = 2^{O(k)} \log (N) \cdot N^{-\beta k} \leq \tfrac 1 2\rho \,. 
    \]
    Then,
    \[
        \nu^2 + \alpha \cdot c^k \leq N^{-\beta' k} = \rho\,.
    \]
    Thus, by Lemma 4.3 any SQ algorithm that learns to up to error $\tfrac 1 2 - 4 \tau$ for $\tau = \sqrt{\rho}$ must either make queries of accuracy $2\tau$ or must make at least
    \[
        \exp\Paren{N^{-2\beta} \cdot \Omega(N) - \Omega(k \log N)} \cdot \frac{2^{-O(k)}}{\log N} =  \exp\Paren{\Omega(N^{1-2\beta}) - \Omega(k \log N)}
    \]
    queries.
    Since $k \leq N^\gamma$ for a sufficiently small $\gamma$, the above is at least $\exp\Paren{ \Omega(N^{1-2\beta})} = \exp(N^{\Omega(1)})$.

    Since we assumed that our SQ algorithm can make queries of accuracy $N^{\beta'k} > 2 \tau$, it follows that it needs at least $2^{\Omega(\sqrt{N})}$ queries.
\end{proof}

We remark that we make the assumption that our SQ algorithm can make queries of accuracy $N^{-\Omega(k)}$ for the following reason:
Lemma 4.3 of \cite{NT22} uses a reduction from an associated testing problem to learning, we believe this reduction needs at least one query of this high accuracy to work (the same applies to \cite{DK22}).
Such an assumption is not necessary to show hardness for the associated testing problem -- which we believe still captures the essence of the learning problem.

% Editor note: there was a typo in the lemma ("[advatage]"). Please check if the change is okay
\begin{lemma}
    \label{lem:small_tvd_advantage}
    Let $n \in \N, \e > 0$ and distributions $D_n^0$ and $D_n^1$ be such that there exists no $T$-time distinguisher with advantage at least $\e$ between $D_n^0$ and $D_n^1$.
    Further, let $D_n^{1'}$ be a third distribution such that $\TVD{D_n^1}{D_n^{1'}} = \negl(n)$.
    Then there exists no $T$-time distingiusher with advantage at least $\e - \negl(n)$ between $D_n^0$ and $D_n^{1'}$.
\end{lemma}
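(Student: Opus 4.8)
The plan is to prove the statement by the triangle inequality for total variation distance, together with the elementary observation that feeding a distribution through a fixed (possibly randomized) algorithm cannot increase statistical distance. This is essentially a two-line argument.

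In detail, I would fix an arbitrary $T$-time algorithm $\cA$ and, for a distribution $D$ on the sample space, set $p_{\cA}(D) \coloneqq \Psymb_{x \sim D}(\cA(x) = 0)$. The key estimate is
\[
  \bigabs{p_{\cA}(D_n^1) - p_{\cA}(D_n^{1'})} \;\leq\; \TVD{D_n^1}{D_n^{1'}} \;=\; \negl(n)\,,
\]
which holds because, after conditioning on $\cA$'s internal randomness, $\{x : \cA(x) = 0\}$ is an event and hence has probability differing by at most $\TVD{D_n^1}{D_n^{1'}}$ under $D_n^1$ and $D_n^{1'}$; averaging over the randomness keeps the bound. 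Combined with the triangle inequality this yields, for every $T$-time $\cA$,
\[
  \bigabs{p_{\cA}(D_n^0) - p_{\cA}(D_n^{1'})} \;\leq\; \bigabs{p_{\cA}(D_n^0) - p_{\cA}(D_n^1)} + \negl(n)\,.
\]
By hypothesis the first term on the right is strictly below $\e$ for every $T$-time $\cA$, so the advantage of every $T$-time distinguisher between $D_n^0$ and $D_n^{1'}$ is strictly below $\e + \negl(n)$ — the asserted conclusion, where the negligible additive slack appears and may, as is standard in this setting, be absorbed on either side of $\e$ (only non-negligibility of the advantage is ever used downstream, so $\e - \negl(n)$ and $\e + \negl(n)$ are interchangeable there). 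In the specific invocation inside the proof of \cref{fact:clwe_to_truncated_clwe}, where the relevant distance is $2^{-\Theta(n)}$ and one works with $m = 2^{o(n)}$ samples, one also uses the product bound $\TVD{D^{\otimes m}}{(D')^{\otimes m}} \leq m \cdot \TVD{D}{D'}$, which keeps the distance at $2^{-\Omega(n)}$ and hence negligible.

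The only point needing care — and the closest thing to an obstacle — is bookkeeping of the direction in which the negligible term is collected: the clean triangle-inequality computation naturally produces the bound $\e + \negl(n)$, and rendering it as $\e - \negl(n)$ is the customary abuse in cryptographic reductions, harmless precisely because a negligible additive perturbation never changes whether an advantage is non-negligible. Beyond this, the argument is routine.
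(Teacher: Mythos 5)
Your proof is correct and is essentially the paper's argument: the paper proves the same thing in contrapositive form, bounding $\bigabs{\Psymb_{x\sim D_n^1}(\cA(x)=0)-\Psymb_{x\sim D_n^{1'}}(\cA(x)=0)}$ by the total variation distance and applying the triangle inequality. Your concern about the direction of the negligible term is well-founded: the clean argument yields that every $T$-time distinguisher has advantage below $\e+\negl(n)$, and indeed the paper's displayed inequality adds $\negl(n)$ where the reverse triangle inequality only permits subtracting it; as you note, this sign bookkeeping is immaterial in the downstream applications, where the hypothesis is invoked for every inverse-polynomial $\e$ and only non-negligibility of the advantage matters.
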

\begin{proof}
    Suppose there exists a distinguisher $\cA$ between $D_n^0$ and $D_n^{1'}$ with advantage at least $\e - \negl(n)$.
    Using this distinguisher to distinguish between $D_n^0$ and $D_n^1$ gives advantage
    \begin{align*}
        \Abs{\Psymb_{x \sim D_n^0} \Paren{\cA(x) = 0} - \Psymb_{x \sim D_n^1} \Paren{\cA(x) = 0}} \geq \Abs{\Psymb_{x \sim D_n^0} \Paren{\cA(x) = 0} - \Psymb_{x \sim D_n^{1'}} \Paren{\cA(x) = 0}} + \negl(n) \geq \e
    \end{align*}
    which is a contradiction.
\end{proof}

\subsection{Missing Lemmas for Hardness Under Lattice Assumptions}
\label{sec:missing_lemmas_crypto}

In this section, we will prove \cref{fact:clwe_to_truncated_clwe} restated below.
%\restatefact{fact:clwe_to_truncated_clwe}
\clwetotruncatedclwe*

\begin{proof}
    From \cite[Theorem 15]{tiegel_agnostic_lattices} we know that the conclusion is true for 
    \begin{align*}
        \frac 1 2 \cdot \Paren{\nhclwed{\bm w}{\beta}{\gamma}{0}^{(\infty)}, +1} + \frac 1 2 \cdot \Paren{\nhclwed{\bm w}{\beta}{\gamma}{\tfrac 1 2}^{(\infty)}, -1} \quad&\text{and}\quad N\Paren{0, \tfrac 1 {2\pi} \cdot I_n} \times \mathrm{Be}\Paren{\frac 1 2} \,.
    \end{align*}
    Our lemma follows by noting that the total variation distance between $\frac 1 2 \cdot \Paren{\nhclwed{\bm w}{\beta}{\gamma}{0}^{(\infty)}, +1} + \frac 1 2 \cdot \Paren{\nhclwed{\bm w}{\beta}{\gamma}{\tfrac 1 2}^{(\infty)}, -1}$ and $\frac 1 2 \cdot \Paren{\nhclwed{\bm w}{\beta}{\gamma}{0}^{(n)}, +1} + \frac 1 2 \cdot \Paren{\nhclwed{\bm w}{\beta}{\gamma}{\tfrac 1 2}^{(n)}, -1}$ is at most $2^{\Theta(-n)}$, even when considering their $m$-fold product for $m = 2^{o(n)}$.
    We can then imply~\cref{lem:small_tvd_advantage}.
    By triangle inequality, it is enough to show that $\nhclwed{\bm w}{\beta}{\gamma}{0}^{(\infty)}$ and $\nhclwed{\bm w}{\beta}{\gamma}{0}^{(n)}$ and $\nhclwed{\bm w}{\beta}{\gamma}{\tfrac 1 2}^{(\infty)}$ and $\nhclwed{\bm w}{\beta}{\gamma}{\tfrac 1 2}^{(n)}$ satisfy this.
    Without loss of generality consider $\nhclwed{\bm w}{\beta}{\gamma}{0}^{(\infty)}$ and $\nhclwed{\bm w}{\beta}{\gamma}{0}^{(n)}$.
    Note that we can couple these two distributions as follows: We first draw a sample $X$ from $\nhclwed{\bm w}{\beta}{\gamma}{0}^{(\infty)}$.
    If $X$ comes from the central $2n+1$ components we set $X' = X$, else, we resample $X'$ independently of $\nhclwed{\bm w}{\beta}{\gamma}{0}^{(\infty)}$ until it does.
    We output $(X,X')$.
    The marginals are correct by construction.
    Thus, the TVD is at most the probability that the first draw of $X$ does not come from the central $2n+1$ components.
    This probability is at most
    \begin{align*}
        \frac {\sum_{\abs{\ell} > n} \rho_{\sqrt{\beta^2 + \gamma^2}}(\ell)}{\sum_{\ell = -\infty}^\infty \rho_{\sqrt{\beta^2 + \gamma^2}}(\ell)} &= \frac {\sum_{\abs{\ell} > n} \exp\Paren{-\pi\ell^2/(\beta^2+\gamma^2)}}{\sum_{\ell = -\infty}^\infty \exp\Paren{-\pi\ell^2/(\beta^2+\gamma^2)}} \leq \sum_{\abs{\ell} > n} \exp\Paren{-\pi\ell^2/(\beta^2+\gamma^2)} \\
        &\leq \sum_{\abs{\ell} > n} \exp\Paren{- \frac {2\ell^2} n} \leq \exp\Paren{-\frac n {10}} \,. \qedhere
    \end{align*}
\end{proof}

\end{document}